\newtheorem{theorem}{Theorem}
\newtheorem{proposition}{Proposition}
\newtheorem{proof}{Proof}
\newtheorem{definition}{Definition}
\newtheorem{property}{Property}
\newtheorem{corollary}{Corollary}
\begin{document}

\title{Throughput Analysis of Cognitive Wireless Networks with Poisson Distributed Nodes\\ Based on Location Information}
\author{Pedro H. J. Nardelli, Carlos H. Morais de Lima,\\ Hirley Alves, Paulo Cardieri, and Matti Latva-aho, 
\thanks{This research has been partly supported by: Infotech Graduate School at University of Oulu,  CNPq 312146/2012-4, the Brazilian Science without Boarders Special Visiting Researcher fellowship CAPES 076/2012, and SUSTAIN Finnish Academy and CNPq 490235/2012-3 jointly funded project.

Pedro H. J. Nardelli, Carlos H. Morais de Lima, Hirley Alves and Matti Latva-aho are with Center for Wireless Communications (CWC), University of Oulu, Finland (e-mail:~nardelli@ee.oulu.fi).
Carlos H. Morais de Lima is also with São Paulo State University, campus São João da Boa Vista, Brazil.
Paulo Cardieri is with the Wireless Technology Laboratory (WissTek), State University of Campinas, Brazil.}}

\maketitle
\markboth{Submission  v1: \today}{}

\begin{abstract}
This paper provides a statistical characterization of the individual achievable rates in bits/s/Hz and the spatial throughput of bipolar Poisson wireless networks in bits/s/Hz/m$^2$.
We assume that all transmitters have a cognitive ability to know the distance to their receiver's closest interferers so they can individually tune their coding rates to avoid outage events for each spatial realization.
Considering that the closest interferer approximates the aggregate interference of all transmitters treated as noise, we derive closed-form expressions for the probability density function of the achievable rates under two decoding rules: treating interference as noise, and jointly detecting the strongest interfering signals treating the others as noise.
Based on these rules and the bipolar model, we approximate the expected maximum spatial throughput, showing the best performance of the latter decoding rule.
These results are also compared to the reference scenario where the transmitters do not have cognitive ability, coding their messages at predetermined rates that are chosen to optimize the expected spatial throughput -- regardless of particular realizations -- which yields outages.
We prove that, when the same decoding rule and network density are considered, the cognitive spatial throughput always outperforms the other option.
\end{abstract}

\begin{keywords}
Cognitive networks, interference, Poisson point process, spatial throughput, stochastic geometry
\end{keywords}

\section {Introduction}
In the last few years, the demands for more efficient, reliable wireless systems induced network designers to think about alternative ways to supplement centralized cellular models.
One interesting idea is to build a multi-tier network where macro-base-stations coexist with a great number of smaller cells, which in turn operate in a more distributed fashion (e.g. the concept of femto-cell networks \cite{Chandrasekhar2008}).
Departing from the centralized approach whose capacity are fairly well characterized by Shannon theory, the limits of distributed systems that work in interference-limited regimes are unknown except for few specific cases, as discussed in \cite{andrews_rethinking_2008}.
In the following, we will discuss the main results on interference networks and how the concept of cognitive radio introduced in \cite{Haykin2005} is important in this context.

\subsection{Capacity of interference networks}
In 1978 Carleial formally stated the interference channel problem using arguments from information theory \cite{Carleial1978}.
Since then, several results have been proposed for the interference channel as discussed in  \cite[Ch. 6]{ElGamal_2012}.
Although these works shed light on the problem, even the capacity region of the simplest two-source-two-destination setting is still an open problem.
Moreover, when multiples sources and destinations are considered, such capacity regions becomes even more elusive.

Knowing such difficulties, some researchers have started investigating alternative approaches to better understand the limits of wireless networks with multiple communication pairs.
Gupta and Kumar introduced in \cite{Gupta2000} the transport capacity metric to determine how many bits-meter a wireless network with uniformly distributed nodes can reliably sustain when its density grows to infinite (asymptotic analysis).
After this milestone, many other papers have focused on a similar idea, finding the transport capacity scaling laws for different scenarios and under different assumptions.
The monograph \cite{Xue2006} compiles some of such studies.
Franceschetti et al. presented another important result in \cite{Franceschetti2009}, where they applied an unconventional method to find the physical limit of wireless networks by using laws of electrodynamics.
The authors further extended this approach in  \cite{Franceschetti2011} and determined the degrees of freedom of wireless networks based on electromagnetic theory.

Nevertheless both Franceschetti's and Gupta's lines of research strongly rely on asymptotic behaviors when the number of nodes infinitely grows, which  may  give an unclear picture of the actual  physical or medium access control network layers' design.
Bearing this aspect in mind, Weber et al. applied in \cite{WebYan2005} a statistical approach to characterize the throughput of wireless networks and then defined the transmission capacity as the highest spatial throughput\footnote{In the literature, spatial throughput can be also referred to as area spectral efficiency \cite{Alouini1999} or density of throughput \cite{baccelli2009_sto}} achievable without exceeding a maximum link outage probability, using the density of active links as the optimization variable.
An important aspect of this work is the use of stochastic geometry \cite{Baddeley_spatial_2007} to characterize the node spatial distribution as a Poisson point process (PPP).
Thereafter different strategies used in the wireless communications have been investigated such as interference cancellation, threshold transmissions, guard zones, bandwidth partitioning amongst others; the reference \cite{Weber2012} compiles these results.
In addition to them, we find in the literature other contributions using a similar approach.
For example, Vaze studied in \cite{Vaze2011} the throughput-delay-reliability trade-off in multi-hop networks using the metric random access transport capacity, which is an extension of the transmission capacity for multi-hop systems \cite[Sec. 4.2]{Weber2012}.
In \cite{Nardelli2013a}, the authors derived closed-form expressions for the throughput optimization under packet loss and queue stability constraints.
In \cite{Nardelli2011} a revisited version of the transmission capacity was proposed to compare different modulation-coding schemes.
The work \cite{Nardelli2012b} presents the transmission capacity optimization in term of the number of allowed retransmissions considering different medium access control protocols, which can be either synchronous or asynchronous.
Ganti et al. generalized in \cite{KrishnaGanti2011} the transmission capacity for different fading and node distributions for the high signal-to-interference regime.

Apart from these papers that focus on the statistic quantification of the spatial throughput of wireless networks, the use of models from stochastic geometry dates back the early 80's, when Takagi and Kleinrock firstly introduced the idea of evaluating the aggregate interference power of Poisson distributed interfering nodes \cite{Takagi1984}.
Thereafter, the subject has greatly developed and we can cite \cite{Baccelli2009_1,Baccelli2009_2,Haenggi,Cardieri2010} as relevant tutorials on how to apply stochastic geometry when analysing wireless systems.
As we will see later, this approach is important when dealing with cognitive networks, where self-organizing solutions are employed in a distributed manner.

\subsection{Complex systems and cognitive radio}
Let us start presenting a brief description of complex systems from the \cite{Page_2010}: ``A complex system consists of diverse entities that interact in a network or contact structure -- a geographic space, a computer network, or a market. 
These entities' actions are interdependent -- what one protein, ant, person, or nation does materially affects others.
In navigating within a complex system, entities follow rules, by which I mean prescriptions for certain behaviours in particular circumstances''.

For example, the tragedy of the commons problem described in \cite{Hardin1968} illustrates a counter-intuitive feature  of many independent and rational agents sharing a common pool of limited resources.
In this scenario, the agents optimize their own pay-offs in a selfish manner, i.e. find their individual global optimum, regardless of the others. 
Consequently, if every single agent takes the same rational decision, the shared resource will fade away after some time.
This problem is very context-dependent; for example, both fishing in a lake and forest harvesting can be  viewed as a tragedy of the commons class of problem, but the solution applied for each case tends to differ as the internal constraints of each system are different.
For wireless networks, the authors in \cite{Nardelli2013a} showed that the spatial throughput optimization under packet loss and queue stability constraints can be also viewed as a tragedy of the commons problem.

Another issue related to complex systems refers to the interplay between coordination and cooperation.
In game theory, the  prisoners' dilemma  is a good example of how coordination based on side information is important to optimize the system \cite{Leyton-Brown2008}.
In this game, rational agents, which cannot communicate to each other, should choose whether to cooperate or not.
If both cooperate, they get a higher pay-off than do not.
However, if one cooperate and the other does not, the non-cooperative agent will obtain a higher pay-off.
This fact leads to both agents not cooperating, which in turn provides lower pay-offs.
One interesting work was recently proposed by Nowak \cite{Nowak2012}, where the author describe different ways that cooperative behaviour can emerge in evolutionary systems.

Cooperative solutions are also important when dealing with co-channel interference in wireless networks.
For example, the authors in \cite{Pantisano13} employed game theory to build an algorithm to find coalitions of femto-cells that are willing to cooperate.
In \cite{Lima12} distributed coordination mechanisms were employed to control the aggregate interference level in stand-alone femto-cell networks.

Interestingly, these examples are based on self-organizing solutions, which refers to decentralized systems that are functional even without any central controlling entity (even though following interaction rules).
Many illustrations of this can be found in nature as, for instance, ants working in colonies, neurons building a capable brain etc. \cite{Page_2010}.
It is important to say that, different from these solutions that have emerged naturally, engineering systems do not accept outputs without a minimum quality requirement and therefore self-organization should be carefully designed, where the cognitive abilities and interaction rules should be well understood. 

Knowing the potential and the challenges of self-organization in engineering, Haykin proposed in his seminal work \cite{Haykin2005} the definition of cognitive radio: ``(...)  intelligent wireless communication system that is aware of its environment and uses the methodology of understanding-by-building to learn from the environment and adapt to statistical variations in the input stimuli, with two primary objectives in mind: highly reliable communication whenever and wherever needed; efficient utilization of the radio spectrum''.
This work indicates the direction to design more efficient wireless systems and thereafter the cognitive radio research have been rapidly growing.
%

\subsection{Contributions}
Motivated by the cognitive radio idea and the results presented in \cite{Nardelli2012} where the authors showed the importance of location information to improve the throughput of wireless networks, this paper focuses on studying wireless networks where every transmitter -- which are spatially distributed as a Poisson point process -- is able to use in a cognitive way the knowledge of its relative distances to the other transmitters for each different spatial realization. 
Following the results due to Baccelli et al. \cite{Baccelli2011}, we apply two different decoding rules: \emph{treating interference as noise} -- the IAN rule -- and \emph{joint detection of the strongest interferers' messages and treating the others as noise}\footnote{This rule splits the set of interferers into two mutually exclusive subsets: one contains the strongest interferers whose messages will be joint decoded with the desired one, and the other contains the transmitters with weaker detected power that will be treated as noise. This strategy is proved in \cite{Baccelli2011} the optimal for Gaussian point-to-point codes over interference channels, as discussed later on.} -- the OPT rule\footnote{We do not assume any interference cancellation (IC) technique as in \cite[Sec. 4.2]{Weber2012}, \cite{Mordachev2009,Vaze2012,Huang2008} since the OPT rule used in this paper always performs better than IC, as discussed in \cite{Baccelli2011,Blomer2009}.}.

Assuming that the aggregate interference can be approximated by the strongest interferer treated as noise, we derive an approximate probability density function (pdf) of the achievable rate in  bits/s/Hz that a typical link can sustain for the above decoding rules.
If the network follows the bipolar model \cite{Baccelli2009_1}\footnote{The details of this model will be described later on.}, the expected maximum spatial throughput of the network in bits/s/Hz/m$^2$ can be also approximated using those pdfs.

For comparison purposes, we consider the non-cognitive approach where transmitters use the same fixed coding rates (which is the most usual approach found in the literature, as in \cite{Weber2012}, \cite[Sec. IV]{Baccelli2011}, \cite{Blomer2009}), regardless of the specific spatial realization considered.
We then compute the highest spatial throughput for this setting by optimizing of the expected spatial throughput over different spatial realizations, where the optimization variable is the (symmetric) rate that the transmitters code their messages.
Differently from the former scenario where the coding rates are tuned to be the highest achievable ones given the relative nodes' positions for each different spatial realization, the fixed rate scheme only cares about the average behaviour of the network, resulting in decoding errors (outage events) for links whose capacity is below that predetermined rate.
We analytically prove that, under the same assumptions, the non-cognitive strategy always performs worse than the cognitive one.
Our numerical results confirm this difference and illustrate the advantages of using OPT over  IAN.

We also carry out an extensive simulation campaign to validate our findings and justify why our analysis is still relevant even when our approximations are loose -- although the closest-interferer approximation becomes looser when the network density grows, it follows that the qualitative relation and the quantitative ratio between the different strategies are maintained.
Besides we discuss the feasibility of the decoding rules and optimization strategies for different mobility patterns.
The cognitive approach is a feasible solution in (quasi-)static topologies, while the fixed rate optimization with IAN turns out to be the most appropriate choice in highly mobile topologies,.

The rest of this paper is divided as follows.
In Section \ref{sec_capacity_region}, we revisit the capacity region of Gaussian point-to-point codes over interference channels \cite{Baccelli2011} and then define the spatial throughput of wireless networks.
Section \ref{sec_spatial_Poisson} introduces the network model and the expected maximum spatial throughput using the cognitive approach.
Section \ref{sec_IAN} analyses the IAN decoder, while the OPT is the focus of Section \ref{sec_OPT}.
A comparison between the cognitive and the non-cognitive approaches is found in Section \ref{sec_Sym}.
We discuss both the accuracy of our approximations and implementation issues in Section \ref{sec_Discussions}, followed by the final remarks in Section \ref{sec_Final}.


\section{Capacity Region of Gaussian Point-to-Point Codes}
\label{sec_capacity_region}
This section reviews the capacity region of Gaussian point-to-point (G-ptp) codes for an arbitrary number of communication pairs as stated by Baccelli et al. in \cite[Sec. II]{Baccelli2011}.
For convenience let us assume a network with area $A$ [m$^2$] where $K+1$ source-destination pairs (also called transmitter-receiver pairs) coexist.
Each source node $i \in [0,\; K]$  transmits an independent message $M_i \in \left[1,\; 2^{n R_i}\right]$ to its respective destination $i$ at rate $R_i$ [bits/s/Hz], where $n$ is the codeword length.
Let $X_j$ be the complex signal transmitted by source $j \in [0,\; K]$ and let $Z_i \sim \mathcal{CN}(0,1)$, a the complex circularly symmetric Gaussian random variable with zero mean and unity variance, represent the noise effect at receiver $i$.
The detected signal $Y_i$  at receiver $i$ is then:
\begin{equation}
\label{eq_detected_signal}
    Y_i = \sum_{j=0}^K g_{ij} X_j + Z_i,
\end{equation}
where $g_{ij}$ are the complex channel gains between transmitter $j$ (TX$_j$) and receiver $i$ (RX$_i$).
We assume that every transmitted signal is subject to the same power constrain of $Q$ [W/Hz] such that the received signal between TX$_j$ and RX$_i$ is constrained by $P_{ij} = |g_{ij}|^2 Q$.

Each transmitter node uses a G-ptp code with a set of randomly and independently generated codewords $x_i^n(m_i) = (x_{i1},...,x_{in})(m_i)$ following  independent and identically distributed $\mathcal{CN}(0,\sigma^2)$ sequences such that $0< \sigma^2 \leq Q$, where $m_i \in \left[1,\; 2^{n R_i}\right] and \; i \in [0,\; K]$.
RX$_i$ receives a signal $y_i^n$ over the interference channel given by \eqref{eq_detected_signal} and then estimates the transmitted message as $\hat{m_i}(y_i^n) \in \left[1,\; 2^{n R_i}\right]$.
An error event in the decoding happens whenever the transmitted message differs from the estimated one.
Therefore the error probability of the G-ptp code is:
\begin{equation}
\label{eq_error_probability}
    p_n = \dfrac{1}{1+K} \; \sum_{i=0}^K \textup{Pr}[\hat{M_i} \neq M_i],
\end{equation}
where $Pr[\cdot]$ denotes probability that an event happens and $\hat{M}$ is the estimated message.

Next we use \eqref{eq_error_probability} to define the achievable rates and the capacity region for G-ptp codes.

\begin{definition}[achievable rates and capacity region]
\label{def_Achievable}
\emph{Let $\overline{p}_n$ be the average error probability  over G-ptp codes where $n$ is the codeword length.
Then, a rate tuple $\mathbf{R} = (R_0,...,R_K)$ is said to be achievable if $\overline{p}_n \rightarrow 0$ when $n \rightarrow \infty$.
In addition, the capacity region using G-ptp codes is the closure of the set of achievable  rate tuples $\mathbf{R}$.}
\end{definition}
 
This definition is important to define o establish the the capacity region of G-ptp codes as follows

\begin{theorem}[capacity region from \cite{Baccelli2011}]
\label{the_cap_reg}
Let $\mathcal{A}$ be the set of all $K+1$ transmitters in the network.
Let $\mathcal{A}_i$ denote a subset of $\mathcal{A}$ that contains TX$_i$ with $i  \in [0,\; K]$ and $\bar{\mathcal{A}}_i$ its complement.
The receiver of interest RX$_i$ then observes a multiple access channel (MAC) whose capacity region $\mathcal{R}_i$ is computed as
\begin{equation}
\label{eq_MAC_capacity_region}
    \mathcal{R}_i = \left\{ \mathbf{R}: \; \underset{k \in \mathcal{A}_i}{\sum} R_k \leq \log_2\left( 1 + \dfrac{\underset{k \in \mathcal{A}_i}{\sum} P_{ik}}{1 + \underset{j \in \bar{\mathcal{A}}_i}{\sum} P_{ij}} \right) \forall \;\; \mathcal{A}_i \subseteq \mathcal{A} \right\}.
\end{equation}

The capacity region $\mathcal{R}$ of the Gaussian interference channel with G-ptp codes is the intersection of the capacity regions $\mathcal{R}_i$ of all TX$_i$-RX$_i$ links with $i  \in [0,\; K]$, i.e.
\begin{equation}
\label{eq_capacity_region}
    \mathcal{R} =  \bigcap_{i=0}^{K} \mathcal{R}_i.  
\end{equation}
\end{theorem}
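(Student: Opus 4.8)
The plan is to show that the capacity region $\mathcal{R}$ coincides with the intersection $\bigcap_{i=0}^K \mathcal{R}_i$ by establishing the two inclusions separately, leveraging standard multiple-access channel (MAC) coding arguments applied independently at each receiver. First I would fix a receiver RX$_i$ and observe that, because each transmitter employs an i.i.d.\ Gaussian point-to-point codebook and all codebooks are generated independently, RX$_i$ faces a Gaussian MAC: the signals $\{X_k\}_{k \in \mathcal{A}}$ arrive superimposed through the gains $g_{ik}$, and the receiver may choose any subset $\mathcal{A}_i \subseteq \mathcal{A}$ with $i \in \mathcal{A}_i$ to jointly decode while absorbing the remaining transmissions $\{X_j\}_{j \in \bar{\mathcal{A}}_i}$ into the effective Gaussian noise of variance $1 + \sum_{j \in \bar{\mathcal{A}}_i} P_{ij}$ (valid since the $X_j$ are Gaussian). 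Applying the classical Gaussian MAC achievability result (random coding with joint typicality decoding, followed by Fourier--Motzkin elimination of the undesired rates) gives that any rate tuple satisfying the family of constraints in \eqref{eq_MAC_capacity_region} for \emph{some} choice of $\mathcal{A}_i$ is decodable at RX$_i$ with vanishing error probability; taking the union over all admissible $\mathcal{A}_i$ and the closure yields $\mathcal{R}_i$ as the set of rates at which $M_i$ can be reliably recovered given the interfering codebooks.

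For the converse at RX$_i$, I would invoke Fano's inequality together with the fact that the interfering codewords, being genuine i.i.d.\ Gaussian sequences (not adversarial or arbitrarily structured), contribute exactly Gaussian interference; the cut-set / MAC converse then shows no rate tuple outside $\mathcal{R}_i$ can have $\overline{p}_n \to 0$ at receiver $i$. Combining achievability and converse, a tuple $\mathbf{R}$ is achievable at RX$_i$ individually if and only if $\mathbf{R} \in \mathcal{R}_i$.

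The final step is to aggregate over all links. Since the overall error probability \eqref{eq_error_probability} is the average of the per-link error probabilities, $\overline{p}_n \to 0$ forces $\mathrm{Pr}[\hat M_i \neq M_i] \to 0$ for every $i \in [0,K]$ simultaneously; hence $\mathbf{R}$ must lie in each $\mathcal{R}_i$, giving $\mathcal{R} \subseteq \bigcap_i \mathcal{R}_i$. Conversely, if $\mathbf{R} \in \bigcap_i \mathcal{R}_i$, each receiver can (with its own chosen decoding subset) drive its error to zero, so the average does too, giving $\bigcap_i \mathcal{R}_i \subseteq \mathcal{R}$; taking closures preserves the identity \eqref{eq_capacity_region}. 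The main obstacle is the converse argument at a single receiver: one must argue carefully that restricting to G-ptp codes does not let a transmitter ``shape'' its interference to exceed the Gaussian-MAC bound — i.e.\ that the worst-case and actual interference statistics coincide here because the codebooks are, by hypothesis, i.i.d.\ Gaussian — so the standard MAC outer bound applies verbatim to each $\mathcal{R}_i$; everything else is bookkeeping over subsets and the averaging in \eqref{eq_error_probability}.
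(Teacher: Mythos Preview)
The paper does not actually prove this theorem: its entire proof reads ``The proof of this theorem is found in \cite[Sec.~II]{Baccelli2011}.'' So there is nothing in the paper to compare your argument against beyond the citation itself. That said, your outline does trace the broad architecture of the Baccelli--El~Gamal--Tse argument (per-receiver MAC analysis, then intersection over receivers), so in spirit you are aligned with the cited source.

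However, your sketch skips the step that carries essentially all of the technical weight in \cite{Baccelli2011}. Two issues:
\begin{itemize}
\item Your converse at a single receiver is too quick. You write that ``Fano's inequality together with the MAC converse'' shows no tuple outside $\mathcal{R}_i$ is achievable at RX$_i$. But RX$_i$ is only required to decode $M_i$, not any of the interfering messages, so the standard MAC converse (which bounds the rates of \emph{all} jointly decoded users) does not apply directly. Why should a constraint of the form $\sum_{k\in\mathcal{A}_i} R_k \le \log_2(1+\cdots)$ bind when RX$_i$ has no obligation to recover $M_k$ for $k\neq i$? The actual argument in \cite{Baccelli2011} handles this by showing that if $M_i$ is decodable then there \emph{exists} a subset $S\ni i$ for which the full family of MAC constraints (over all $T\subseteq S$) holds, and this requires a genuine information-theoretic step, not just Fano plus a cut-set bound.
\item Relatedly, your achievability paragraph produces a \emph{union} over decoding subsets of MAC regions, whereas the stated $\mathcal{R}_i$ in \eqref{eq_MAC_capacity_region} is written as an \emph{intersection} of single sum-rate constraints indexed by all $\mathcal{A}_i\ni i$. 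That these two descriptions coincide is precisely the structural lemma that makes the Baccelli--El~Gamal--Tse result clean; it is not automatic, and your proposal does not address it. (Your mention of Fourier--Motzkin elimination is also out of place here: there are no auxiliary rate-splitting variables to eliminate.)
\end{itemize}
The aggregation step over $i$ via \eqref{eq_error_probability} is fine; the gap is entirely in the single-receiver characterization.
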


\begin{proof}
The proof of this theorem is found in \cite[Sec. II]{Baccelli2011}.
\end{proof}

This capacity region assumes a decoder that treats some of the interferers as noise, while others have their messages jointly decoded with the desired one.
In fact, this result is the basis of the OPT strategy mentioned in the previous section and further studied in Section \ref{sec_OPT}.
%

\section{Spatial Throughput of Bipolar Poisson Networks}
\label{sec_spatial_Poisson}
In this section, we will apply the results previously stated to establish an approximation for the spatial throughput of bipolar cognitive networks with transmitter nodes distributed according to a PPP.
But before that, let us define the spatial throughput and its maximum value using Theorem \ref{the_cap_reg} for a given spatial realization of the network as follows:

\begin{definition}[spatial throughput]
\emph{Let $A$ [m$^2$] be the network area and $K$ be the number of active links in $A$.
Then the spatial throughput, denoted by $\mathcal{S}$ and measured in bits/s/Hz/m$^2$, is defined as
\begin{equation}
\label{eq_spatial_rate}
    \mathcal{S} = \dfrac{1}{A} \; \sum_{i = 0}^{K} R_i.
\end{equation}
}
\end{definition}

\begin{definition}[maximum spatial throughput]
\emph{The maximum spatial throughput, denoted by $\mathcal{S}^*$, is defined as
\begin{equation}
\label{eq_spatial_capacity}
    \mathcal{S}^* = \max_{\mathbf{R}  \in  \mathcal{R}} \; \mathcal{S},
\end{equation}
such the rate tuple is achievable: $\mathbf{R} = (R_0,...,R_K)  \in  \mathcal{R}$.
}
\end{definition}

The maximum spatial throughput reflects the highest sum of achievable rates over a given area and it may vary depending on the network topology.
For example, clustered topologies (where transmitter--receiver pairs that are closer to each other, worsening the co-channel interference) tend to have  lower individual channel capacities than in more sparse ones, leading to different spatial throughputs even when the same area and number of links are considered.
To deal with this issue, we opt for studying Poisson distributed networks that are analytically tractable, allowing us to derive approximate expressions for $\mathcal{S}^*$ over different spatial realizations.

Let $\Phi$ be a two-dimensional homogeneous Poisson point process (PPP) with density $\lambda$ [nodes/m$^2$] that characterizes the spatial distribution of transmitters (TXs) over $\mathbb{R}^2$.
We assume that each TX is associated with one receiver (RX) located at a fixed distance $d$ from it in a random orientation\footnote{Note that the RXs are not part of the process  $\Phi$.} to establish a communication link; this is also known as Poisson bipolar model \cite{Baccelli2009_1}.
We consider that all TXs transmit information to their intended RXs over the same frequency band  (narrow-band) and using the G-ptp codes as described in Section \ref{sec_capacity_region}.

For each realization of  $\Phi$, the network may have a different capacity region $\mathcal{R}$ and consequently different maximum spatial throughputs $\mathcal{S}^*$.
When the network area is the infinite plane (i.e. $\mathbb{R}^2$), the capacity region given by equation \eqref{eq_capacity_region} becomes impossible to be computed\footnote{It is important to keep in mind that the number of links $K \rightarrow \infty$ when $A \rightarrow \mathbb{R}^2$.}.
Knowing these limitations, we choose to analyse the expected maximum spatial throughput, which allows us to evaluate the performance of bipolar Poisson networks over different spatial realizations of $\Phi$.
\begin{definition}[expected maximum spatial throughput]
\emph{Let $\mathbf{R}=(R_0,...,R_K)$ be a rate tuple  and $\mathcal{R}$ be the capacity region for a given network realization, then the expected maximum spatial throughput $\mathcal{C}$ is defined as
\begin{equation}
\label{eq_spatial_capacity_new}
    \mathcal{C} =  \mathbb{E}\left[ \mathcal{S}^* \right] = \mathbb{E}\left[ \max_{\mathbf{R}  \in  \mathcal{R}} \; \dfrac{1}{A} \; \sum_{i = 0}^{K} R_i\right],
\end{equation}
where $\mathbb{E}[\cdot]$ represents the expected value.}
\end{definition}

We can now apply properties from the point process theory \cite{Baddeley_spatial_2007} to approximate the average maximum spatial throughput for this class of Poisson networks as follows.

\begin{proposition}[expected maximum spatial throughput for bipolar Poisson networks]
\label{the_spatial_capacity}
For the bipolar Poisson network described in this section, the expected maximum spatial throughput $\mathcal{C}$ is given by:
\begin{equation}
\label{eq_spatial_capacity_poisson}
    \mathcal{C} \approx \lambda \; \mathbb{E}[R^*],
    \vspace{-1ex}
\end{equation}
where $\lambda$ is the network density and $R^*$ is the random variable that characterizes the  maximum spatial throughput achievable rates of a typical link over the network realizations.
\end{proposition}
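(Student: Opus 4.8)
The plan is to invoke the standard machinery of stationary marked point processes, in particular Campbell's formula together with Slivnyak's theorem for the Poisson point process $\Phi$. First I would make the approximation explicit: assume that in a given realization the maximum spatial throughput $\mathcal{S}^*$ is approximately attained by letting each link $i$ code at its own individually maximal achievable rate $R_i^*$ — i.e. the rate obtained when the receiver applies the (OPT or IAN) decoding rule of Theorem~\ref{the_cap_reg} against the interference generated by the other nodes of $\Phi$ in that realization. Under this per-link decoupling approximation, the sum $\sum_{i=0}^{K} R_i$ in \eqref{eq_spatial_rate} becomes $\sum_{i \in \Phi \cap A} R_i^*$, a sum of marks over the points of $\Phi$ lying in the observation window $A$.

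The second step is to take the expectation and apply Campbell's averaging formula to this sum of marks. For a stationary PPP of intensity $\lambda$ with marks $R_i^*$, the Campbell formula gives $\mathbb{E}\big[\sum_{x \in \Phi \cap A} R^*(x)\big] = \lambda \, |A| \, \mathbb{E}^{0}[R^*]$, where $\mathbb{E}^{0}[\cdot]$ denotes expectation under the Palm distribution of $\Phi$. Dividing by $A$ as in \eqref{eq_spatial_capacity_new} cancels the area and yields $\mathcal{C} \approx \lambda\, \mathbb{E}^{0}[R^*]$. The third step is to dispose of the Palm distribution: by Slivnyak's theorem, the Palm distribution of a Poisson process is the law of the original process with an extra point added at the origin, so $\mathbb{E}^{0}[R^*]$ is exactly the expected maximal achievable rate of a typical link — the link with transmitter at the origin and its receiver at distance $d$ — averaged over the interference field generated by an independent copy of $\Phi$. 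Writing this quantity simply as $\mathbb{E}[R^*]$ gives \eqref{eq_spatial_capacity_poisson}.

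I expect the main obstacle to be not the point-process bookkeeping but the justification of the very first step — that the network-wide optimum $\mathcal{S}^* = \max_{\mathbf R \in \mathcal R} \tfrac1A \sum_i R_i$ decouples into a sum of per-link individual maxima. In general the capacity region $\mathcal{R} = \bigcap_i \mathcal{R}_i$ of \eqref{eq_capacity_region} couples the links through the shared rate constraints, so the joint maximization need not equal $\sum_i \max R_i^*$; this is precisely where the word ``approximately'' in \eqref{eq_spatial_capacity_poisson} is doing its work, and it is reinforced by the further ``strongest-interferer-as-noise'' approximation that the later sections use to make $R_i^*$ depend only on the nearest interfering node. Consequently I would present this proposition as an approximation argument: state the per-link decoupling as the modeling assumption, then carry out the Campbell--Slivnyak computation rigorously on the resulting expression, and defer the quantitative assessment of the approximation error to the discussion of accuracy promised for Section~\ref{sec_Discussions}. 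A secondary technicality worth a sentence is that $\mathbb{E}[R^*]$ must be finite for the manipulation to be meaningful; this holds because $R^* \le \log_2(1 + P_{00})$ is bounded once the signal power $P_{00} = |g_{00}|^2 Q$ has finite mean, which follows from the fading and path-loss assumptions of the model.
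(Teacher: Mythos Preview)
Your argument is sound, but the route differs from the paper's in where the approximation is placed and which point-process tool carries the weight. You front-load the heuristic step: you first assume the network optimum decouples into per-link maxima $R_i^*$, so that each $R_i^*$ becomes a well-defined functional of the point configuration; then you invoke Campbell--Mecke and Slivnyak rigorously on the resulting sum. The paper does the opposite: it keeps the \emph{joint} optimizer $\mathbf{R}^*=(R_0^*,R_1^*,\ldots)\in\mathcal{R}$ throughout, writes $\mathcal{C}=\mathbb{E}\bigl[\lim_{A\to\infty}\tfrac{1}{A}\sum_i R_i^*\bigr]$, and only at the last step replaces this by $\lambda\,\mathbb{E}[R^*]$ via a typical-link argument based on Slivnyak. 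Crucially, the paper's remark immediately after the proof states explicitly that neither the spatial ergodic theorem nor Campbell's theorem can be applied at that stage, precisely because the components of the joint optimizer $\mathbf{R}^*$ are interdependent across links; this is why the paper treats step~(b) as an approximation rather than an equality. Your decoupling-first ordering sidesteps that objection: once each $R_i^*$ is defined as the link's individual maximum against the interference field, the interdependence disappears and Campbell--Mecke is legitimately available. In that sense your structuring is cleaner --- it isolates the heuristic (decoupling) from the rigorous bookkeeping (Campbell--Slivnyak) --- whereas the paper's version is more openly heuristic at the final step. One small tightening: what you call ``Campbell's formula for a PPP with marks $R_i^*$'' is really the refined Campbell--Mecke formula, since your marks depend on the entire configuration $\Phi$ and not just on independent randomness attached to each point; it would be worth naming it as such.
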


\begin{IEEEproof}
Let us first remind that the spatial process $\Phi$ takes place in $\mathbb{R}^2$ and then $A \rightarrow \infty$, $K\rightarrow \infty$ and $\mathbf{R}=(R_0,R_1,...)$.
Then, we proceed with the following manipulation:
    \begin{align}
    \mathcal{C}
        & = \mathbb{E}\left[ \max_{\mathbf{R}  \in  \mathcal{R}} \; \lim_{A\rightarrow \infty} \dfrac{1}{A} \; \sum_{i = 0}^{\infty} R_i\right],& \\ \vspace{1ex}
        \label{eq_proof_approx_1}
        & \stackrel{(a)}{=} \mathbb{E}\left[\lim_{A\rightarrow \infty} \dfrac{1}{A} \; \sum_{i = 0}^{\infty} R^*_i\right],& \\ \vspace{1ex}
        \label{eq_proof_approx_2}
        & \stackrel{(b)}{\approx} \lambda \; \mathbb{E} [R^*].&
    \end{align}

Specifically, equality ($a$) considers the value of  $\mathbf{R}^* = (R_0^*,R_1^*,...) \in  \mathcal{R} $ that leads to the maximum spatial throughput for a given network realization, resulting in  $ \mathcal{S}^*$.
Since the PPP under analysis is homogeneous, we can apply Slivnyak theorem \cite[Ch. 3]{Baddeley_spatial_2007} to determine the statistical proprieties of any node in $\Phi$ over different spatial realizations based on a ``typical link'' -- a receiver node added at the origin, whose transmitter node is $d$ meters away from it.
Denoting the optimal coding rate employed by such a transmitter as $R^*$, we can make the approximation (b) by multiplying the network density $\lambda$ and $R^*$, which concludes this proof.
\end{IEEEproof}
\textbf{Remark:} Equality in $(b)$, instead of approximation in equation \eqref{eq_proof_approx_2},  is not possible since we cannot guarantee that the limit in equation \eqref{eq_proof_approx_1} exists.
It is also worth saying that, in this case, neither the spatial ergodic theorem nor the Campbell's theorem can be applied due to the interdependence between the elements of the optimal rate set $\mathbf{R}^*$ in each specific spatial realization.
In the following sections, we show that it is still possible to assess the performance of a typical link over different realizations based on closed-form expressions, which, we believe, makes valid our proposed approximation \eqref{eq_spatial_capacity_poisson}.

From equation \eqref{eq_spatial_capacity_poisson}, one can see that the main problem is now to derive the distribution of the maximum spatial throughput achieving rates $R^*$, which is our focus in the next two sections.
We would like to mention that Baccelli and Blaszczyszyn have presented in \cite[Sec. 16.2.3]{Baccelli2009_2} a general closed-form solution to the average rate of the typical link using Laplace transforms.
Nevertheless, we argue that our forthcoming derivations also contribute to the field due to their geometric appeal and simpler formulation, where we explicitly compute upper bounds on the Shannon rates of the typical link based on the distance from the typical receiver to its closest interferer that is treated as noise.

\section{Interference as Noise Decoding Rule}
\label{sec_IAN}

In this section we assess the decoding rule whereby the receivers treat the interference as noise -- or IAN decoders.
The following corollary shows its achievable rates.

\begin{corollary}[achievable rates for IAN decoders]
Assuming the noise is Gaussian and considering that TXs employ G-ptp codes as described in Section \ref{sec_capacity_region}, the rate $R_k$ associated with a given link  TX$_k$-RX$_k$ is achievable when IAN decoders are used if, and only if, the following inequality holds:
\begin{equation}
\label{eq_IAN_inequality}
    R_k \leq \log_2\left( 1 + \dfrac{P_{kk}}{1 + \underset{j \in \mathcal{A} \backslash \{ k \} }{\sum} P_{kj}} \right),
    \vspace{-1ex}
\end{equation}
where $ \mathcal{A}$ represents the set of active transmitters.
\end{corollary}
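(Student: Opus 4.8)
The plan is to read this corollary off Theorem~\ref{the_cap_reg} by specializing to the singleton subset $\mathcal{A}_k=\{k\}$, which is exactly the subset an IAN receiver selects. ``Treating interference as noise'' means that RX$_k$ decodes only its own message $M_k$ and regards every other codeword $X_j$, $j\neq k$, as part of the ambient noise; in the notation of Theorem~\ref{the_cap_reg} this is the choice $\mathcal{A}_k=\{k\}$ and $\bar{\mathcal{A}}_k=\mathcal{A}\setminus\{k\}$. Substituting this into \eqref{eq_MAC_capacity_region}, the sum on the left collapses to the single term $R_k$, the numerator becomes $P_{kk}$, and the denominator becomes $1+\sum_{j\in\mathcal{A}\setminus\{k\}}P_{kj}$; hence any rate tuple satisfying \eqref{eq_IAN_inequality} lies in the corresponding face of $\mathcal{R}_k$. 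Since the G-ptp achievability scheme that proves Theorem~\ref{the_cap_reg} in \cite[Sec.~II]{Baccelli2011} uses i.i.d.\ Gaussian codebooks together with a decoder that treats the non-selected transmitters as noise, the decoder realizing this particular face \emph{is} the IAN decoder, which gives the ``if'' direction with no additional work.

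For the ``only if'' direction I would run the converse restricted to IAN receivers. Conditioning on $M_k$ (so that $X_k^n$ equals the transmitted codeword) and averaging over the random G-ptp codebook, the residual signal $W_k^n=\sum_{j\neq k} g_{kj}X_j^n + Z_k^n$ seen by an IAN decoder is an i.i.d.\ circularly symmetric Gaussian sequence with per-symbol variance $1+\sum_{j\in\mathcal{A}\setminus\{k\}}P_{kj}$, because such a decoder by definition uses no knowledge of the interfering codebooks. Thus RX$_k$ effectively observes a point-to-point AWGN channel with signal-to-interference-plus-noise ratio $P_{kk}/\big(1+\sum_{j\neq k}P_{kj}\big)$. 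Applying Fano's inequality together with the fact that a Gaussian input maximizes the mutual information of such a channel, vanishing error probability as $n\to\infty$ forces $R_k\le\log_2\!\big(1+P_{kk}/(1+\sum_{j\neq k}P_{kj})\big)$, i.e.\ \eqref{eq_IAN_inequality}. Combining the two directions yields the claimed equivalence.

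The step I expect to be the main obstacle is making the converse rigorous: one must state explicitly that the assertion concerns the class of IAN decoders (so the single-user bound is genuinely tight rather than merely an inner bound), and one must justify that the aggregate interference-plus-noise seen by such a decoder is effectively Gaussian. The latter rests on the i.i.d.\ Gaussian (G-ptp) codebook structure fixed in Section~\ref{sec_capacity_region} and on averaging over the random codebook, not on an arbitrary code, so some care is needed in phrasing it. Once this is granted, the remaining steps -- writing down the SINR, invoking Fano, passing to the limit, and tracking the power normalization $\sigma^2\le Q$ -- are routine and parallel \cite[Sec.~II]{Baccelli2011}.
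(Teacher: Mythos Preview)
Your approach is correct and matches the paper's: both obtain \eqref{eq_IAN_inequality} by specializing Theorem~\ref{the_cap_reg} to the singleton $\mathcal{A}_k=\{k\}$, which is precisely the IAN choice. The paper's own proof is a single sentence to that effect, whereas you spell out the achievability and converse directions separately and work through the Fano-based argument for the latter; this extra care is sound but goes well beyond what the paper provides.
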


\begin{IEEEproof}
This is a special case of \eqref{eq_MAC_capacity_region} assuming that RX$_k$ only  decodes the message of TX$_k$ while the other TXs are treated as noise.
\end{IEEEproof}

We now apply this corollary to the scenario described in Section \ref{sec_spatial_Poisson} to assess the maximum expected spatial throughput of Poisson networks when receivers use IAN decoders.
Before we start, however, we still need to characterize the propagation phenomenon.
We consider here the distance-dependent path-loss model with exponent $\alpha>2$ \cite{yacoub_foundations_1993} so the channel gain between TX$_j$ and RX$_i$ is $|g_{ij}|^2 =  x_{ij}^{-\alpha}$, where $x_{ij}$ denotes the separation distance between them\footnote{This is in fact a simplified model that may lead to meaningless results for $x_{ij}<1$. As pointed in \cite{Inaltekin2009}, modified versions of this model just increase the complexity of the analysis without providing significant differences. We can also include into our channel modelling the effects of random fluctuations due to shadowing and multi-path as in \cite[Sec.4.1]{Weber2012}. For our purposes, though, the incorporation of these phenomena only complicates the mathematical formulation without giving any further insight on the network behaviour.}. 
We assume the noise power is negligible in comparison to the interference power (interference-limited regime).
We further consider that the aggregate interference experienced by RX$_k$ can be approximated by power $P_{k,\textup{clo}}$ related to its closest interferer.
Mathematically we have the following\footnote{This approximation is analysed in \cite{Mordachev2009} and it usually applied to compute lower bounds of the interference power based on  dominant interferers \cite{Weber2012,Nardelli2012}. 
We also discuss more about it in Section \ref{sec_Discussions}.}: $1 + \underset{j \in \mathcal{A} \backslash \{ k \} }{\sum} P_{kj} \approx P_{k,\textup{clo}}$.
Based on these assumptions, we can derive an approximation of the pdf of the highest achievable rate of the typical link when IAN decoders are employed.

\begin{proposition}[approximate pdf of the highest achievable rates for IAN]
\label{prop_pdf_IAN}
The pdf of highest rate $R^*$ achieved by the typical link can be approximated by 
\begin{equation}
\label{eq_pdf_R_IAN}
    f_{R^*}(x) \approx \ln 4  \; \dfrac{ 2^x \lambda \pi d^2 \left(2^x -1\right)^\frac{2}{\alpha }}{\alpha\left(2^x -1\right)} \; e^{ - \lambda \pi d^2 \left(2^x -1\right)^\frac{2}{\alpha}},
\end{equation}
where $x>0$.
\end{proposition}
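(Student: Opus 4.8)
\emph{Proof sketch.} The plan is to reduce the claim to a one–dimensional change of variables, from the distance to the nearest interferer to the achievable rate of the typical link. First I would invoke Slivnyak's theorem exactly as in the proof of Proposition~\ref{the_spatial_capacity}: conditioning on a typical transmitter--receiver pair and placing the typical receiver at the origin, its transmitter lies at distance $d$ in a random orientation, while the interfering transmitters form a homogeneous PPP of intensity $\lambda$ on $\mathbb{R}^2$, independent of that orientation. Combining the corollary on IAN achievable rates with the interference-limited assumption (negligible noise) and the closest-interferer approximation $1+\sum_{j\in\mathcal{A}\backslash\{k\}} P_{kj}\approx P_{k,\textup{clo}}$, together with $P_{kk}=Q\,d^{-\alpha}$ and $P_{k,\textup{clo}}=Q\,R_{\textup{clo}}^{-\alpha}$ (the power constant $Q$ cancels), the largest rate the typical link can sustain is
\begin{equation}
R^* \;=\; \log_2\!\left(1+\frac{P_{kk}}{P_{k,\textup{clo}}}\right)\;=\;\log_2\!\left(1+\left(\frac{R_{\textup{clo}}}{d}\right)^{\alpha}\right),
\end{equation}
where $R_{\textup{clo}}$ denotes the distance from the typical receiver to its closest interferer.

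Second, I would derive the law of $R_{\textup{clo}}$. Since the interferers form a homogeneous PPP of intensity $\lambda$ and the receiver's position is independent of it, the void probability of a disk of radius $r$ centred at the receiver gives $\Pr[R_{\textup{clo}}>r]=e^{-\lambda\pi r^2}$, hence $f_{R_{\textup{clo}}}(r)=2\pi\lambda r\,e^{-\lambda\pi r^2}$ for $r>0$. Third, I would note that the map $r\mapsto\log_2(1+(r/d)^\alpha)$ is a smooth strictly increasing bijection from $(0,\infty)$ onto $(0,\infty)$, with inverse $r(x)=d\,(2^{x}-1)^{1/\alpha}$ and derivative $r'(x)=(d\ln 2/\alpha)\,2^{x}(2^{x}-1)^{1/\alpha-1}$. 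Substituting into the standard change-of-variables formula $f_{R^*}(x)=f_{R_{\textup{clo}}}\!\big(r(x)\big)\,|r'(x)|$ and collecting terms (using $2\ln 2=\ln 4$ and $(2^{x}-1)^{1/\alpha-1}=(2^{x}-1)^{1/\alpha}/(2^{x}-1)$) yields exactly \eqref{eq_pdf_R_IAN}, valid for $x>0$ because $R^*>0$ almost surely.

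The calculation itself is routine; the only genuine subtlety is conceptual. The symbol ``$\approx$'' in \eqref{eq_pdf_R_IAN} originates entirely from the closest-interferer approximation that replaces the aggregate interference by its dominant term -- every step downstream is exact. I would point out that this substitution underestimates the interference (and hence gives an upper bound on $R^*$), in line with the dominant-interferer bounds of \cite{Mordachev2009,Weber2012,Nardelli2012}, and defer the quantitative assessment of its tightness to Section~\ref{sec_Discussions}. A minor point worth a line is the independence of $R_{\textup{clo}}$ from the link orientation and from the typical transmitter's position, which is what makes the nearest-neighbour distance distribution of a PPP directly applicable.
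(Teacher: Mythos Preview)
Your proof is correct and follows essentially the same route as the paper: Slivnyak, closest-interferer approximation, nearest-neighbour distance law $f_{r_1}(r)=2\pi\lambda r\,e^{-\lambda\pi r^2}$, then a change of variables to $R^*$. The only cosmetic difference is that the paper splits the transformation into two steps via the intermediate SIR variable $\beta_0^*=(r_1/d)^{\alpha}$ (deriving its pdf \eqref{eq_pdf_beta_IAN} first and then passing to $R^*$), which it reuses in Properties~\ref{prop_opt_spatial_capacity_IAN}--\ref{prop_asymp_IAN}; your one-step version is equally valid.
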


\begin{IEEEproof}
Let us analyze a typical link TX$_0$--RX$_0$ added to the PPP $\Phi$.
From Slivnyak theorem (refer to \cite[Th. 3.1]{Baddeley_spatial_2007}), this inclusion does not affect the distribution of $\Phi$.
Without loss of generality, we assume that the origin of the plane is located at RX$_0$ and label the interferers TX$_i$ accordingly to their distances to RX$_0$, i.e. TX$_1$ is the closest, TX$_2$ is the second closest and so on.
From our assumptions, we have $1 + \sum_{k=1}^\infty P_{k} \approx P_1$.
We then apply the path-loss model to the IAN decoder presented in equation \eqref{eq_IAN_inequality}, considering that the distances from TX$_0$ and TX$_1$ to RX$_0$ are respectively $d>0$ and $r_1>0$, resulting in
\begin{equation}
\label{eq_distance_IAN}
    R_0 \leq \log_2\left( 1 + \dfrac{d^{-\alpha}}{r_1^{-\alpha}} \right),
\end{equation}
where $r_1$ is a random variable.

To compute the pdf of $r_1$, we use the definition of contact zone \cite[Defs. 1.9 and 3.2]{Baddeley_spatial_2007} (the distance between a typical point and its first neighbor), resulting in \cite{haenggi_distances_2005} 
\begin{equation}
\label{eq_pdf_r1_IAN}
    f_{r_1}(x) = 2 \lambda \pi x  e^{ -\lambda \pi x^2},
\end{equation}
such that $x>0$.
Defining $\beta_0^*=d^{-\alpha}/r_1^{-\alpha}$ such that inequality \eqref{eq_distance_IAN} still holds, then we have the following relation  $r_1 = d \beta_0^{*\frac{1}{\alpha}}$ (see Fig. \ref{FigureIllustration}).
We now apply this variable transformation to \eqref{eq_pdf_r1_IAN} and hence the pdf of $\beta^*_0 >0$ can be obtained as 
\begin{equation}
\label{eq_pdf_beta_IAN}
    f_{\beta^*_0}(x) = \dfrac{2 \lambda \pi d^2 x^{\frac{2}{\alpha}}}{\alpha x}  e^{- \lambda \pi d^2 x^{\frac{2}{\alpha}}},
\end{equation}
where $x>0$.

\begin{figure}
    \centering
    \includegraphics[width=0.35\columnwidth]{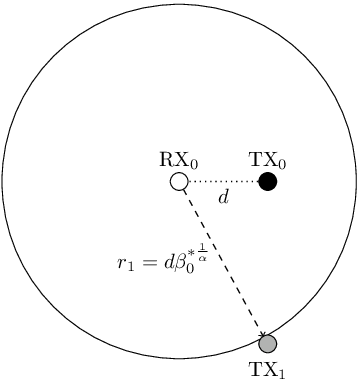}
    \caption{Illustrative example of the typical link TX$_0$--RX$_0$ employing the IAN decoding rule, where TX$_1$ represents the closest interferer to RX$_0$. To reach the highest achievable rate $R_0^*$, the relation $r_1= d \beta_0^{*\frac{1}{\alpha}}$ must be respected such that $r_1$ is the random variable that represents the distance between RX$_0$ and TX$_1$.}
\label{FigureIllustration}
\end{figure}

To conclude this proof, we proceed with the transformation $R^*_0 = \log_2(1+\beta^*_0)$ remembering that PPPs are stationary so we can characterize any node of the network based on a typical node, dropping the index $0$ (refer to \cite[Sec. 3.4]{Baddeley_spatial_2007}).
\end{IEEEproof}

\textbf{Remark:} This maximum value can be achieved only when TX$_0$ knows the distance $r_1$ for each different spatial realization.
Our result consider that TX$_0$ implements a cognitive solution to first acquire local network topology and then use it as side information so as to set its coding rate to be achievable based on the propagation model and the defined TX$_0$-RX$_0$ distance $d$.

The result just stated provides us an approximation\footnote{We discuss the tightness of the closest-interferer approximation later in Section \ref{sec_Discussions}.}  of pdf for IAN decoders over an infinite plane and over different spatial realizations of the process $\Phi$.
Then, we apply \eqref{eq_pdf_R_IAN} to approximate the expected maximum spatial throughput given by \eqref{eq_spatial_capacity_poisson}, resulting in
\begin{equation}
\label{eq_spatial_capacity_IAN}
    \mathcal{C}_\textup{IAN} \approx \lambda \; \int_0^\infty x f_{R^*}(x) \; dx,
\end{equation}
which does not have a closed-form solution and a numerical integration is required.
For this reason, next we derive some proprieties\footnote{Such properties rely on the closest interferer approximation that will be discussed later on. For simplicity we hereafter refer to the approximate expected maximum spatial throughput as cognitive spatial throughput.} of \eqref{eq_spatial_capacity_IAN} that help us to understand the $\mathcal{C}_\textup{IAN}$ behavior.

\begin{property}[concavity of the cognitive spatial throughput]
\label{prop_concavity_IAN}
A function $f(\cdot)$ is said to be quasi-concave if, and only if, $  f\left(p x_1 + (1-p) x_2 \right) \geq \min\{ f(x_1), f(x_2)\}$, where $0\leq p \leq 1$.
Considering that the rate that leads to the cognitive spatial throughput, $R^*$, is a function of the network density $\lambda$ (i.e. $R^*=f(\lambda)$), then
$\mathcal{C}_\textup{IAN}$ given by \eqref{eq_spatial_capacity_IAN} is quasi-concave in terms of $\lambda$, where $R^*$ is a random variable characterized by the pdf \eqref{eq_pdf_R_IAN}.
\end{property}

\begin{IEEEproof}
Let us first consider two different network densities $\lambda_1$ and $\lambda_2$ such that $\lambda_1 < \lambda_2$.
Then, defining that $\lambda = p \lambda_1 + (1-p) \lambda_2$ with $0\leq p \leq 1$,  we proceed with the following manipulation
\begin{align}
    \mathcal{C}_\textup{IAN}(\lambda) & = \left(p \lambda_1 + (1-p) \lambda_2 \right) \;  \mathbb{E}[f\left(p \lambda_1 + (1-p) \lambda_2 \right)] &\\ \vspace{1ex}
    & \stackrel{(a)}{\geq} \;\; \lambda_1 \; \mathbb{E}[f\left(p \lambda_1 + (1-p) \lambda_2 \right)] & \\ \vspace{1ex}
    & \stackrel{(b)}{=} \;\; \lambda_1  \; \mathbb{E}[f(\lambda_1)] \; = \; \mathcal{C}_\textup{IAN}(\lambda_1) &  \\ \vspace{1ex}
    & \stackrel{(c)}{\geq} \;\; \lambda_2 \; \mathbb{E}[f(\lambda_2)] \; = \; \mathcal{C}_\textup{IAN}(\lambda_2). & \vspace{1ex}
\end{align}

Inequality $(a)$ comes from the fact that $\lambda_1 \leq p \lambda_1 + (1-p) \lambda_2$ whereas equality $(b)$ is obtained by setting $p=1$ since the first inequality holds for all $0\leq p \leq 1$.
This proves the quasi concavity of the analyzed function when $\lambda_1 \textup{E}[f(\lambda_1)] < \lambda_2 \textup{E}[f(\lambda_2)]$.
Finally, inequality $(c)$ is straightforward  when $\lambda_1 \textup{E}[f(\lambda_1)] \geq \lambda_2 \textup{E}[f(\lambda_2)]$, which concludes this proof.
\end{IEEEproof}

\begin{property}[highest cognitive spatial throughput]
\label{prop_opt_spatial_capacity_IAN}
The network density $\lambda^*$ that leads to the cognitive spatial throughput given by \eqref{eq_spatial_capacity_IAN} is obtained as the density $\lambda>0$ which is solution to the following equation:
\begin{equation}
\label{eq_lambda_opt_IAN}
    \int_0^\infty x^{\frac{2}{\alpha}-1} \log_2(1+x)    e^{ - \lambda \pi d^2 x^{\frac{2}{\alpha}}} \; d x = \int_0^\infty x^{\frac{2}{\alpha}-1} \left( \lambda \pi d^2 x^{\frac{2}{\alpha}} - 1 \right)  \log_2(1+x)    e^{ - \lambda \pi d^2 x^{\frac{2}{\alpha}}} \; d x.
\end{equation}
\end{property}

\begin{IEEEproof}
Let us first rewrite the cognitive spatial throughput formulation using the pdf $f_{\beta^*}(x)$ given by \eqref{eq_pdf_beta_IAN}, yielding
\begin{equation}
\label{eq_spatial_capacity_IAN_beta}
    \mathcal{C}_\textup{IAN} = \lambda \; \int_0^\infty \log_2 (1+x) f_{\beta^*}(x) \; dx.
\end{equation}

As shown in Property \ref{prop_concavity_IAN}, the $\mathcal{C}_\textup{IAN}$ is quasi-concave in terms of $\lambda$ so we find its maximum value based on the derivative equation $d\mathcal{C}_\textup{IAN}/d\lambda = 0$.
After some algebraic manipulation, we obtain \eqref{eq_lambda_opt_IAN}, which concludes this proof.
\end{IEEEproof}

\begin{property}[lower bound]
\label{prop_lower_IAN}
A lower bound of the cognitive spatial throughput given by \eqref{eq_spatial_capacity_IAN} is computed as
\begin{equation}
\label{eq_capacity_lower_IAN}
    \mathcal{C}_\textup{IAN} \geq \lambda  y  e^{ - \lambda \pi d^2 (2^y -1)^\frac{2}{\alpha}},
\end{equation}
where $y > 0$.
\end{property}

\begin{IEEEproof}
To prove this property, we apply the Markov inequality as presented below:
\begin{equation}
\label{eq_capacity_lower_IAN_proof}
    \textup{Pr}[ R^* \geq y ] \leq \dfrac{\mathbb{E}[R^*]}{y} \Rightarrow \mathbb{E}[R^*] \geq y e^{ - \lambda \pi d^2 (2^y -1)^\frac{2}{\alpha}},
\end{equation}
where $\textup{Pr}[ R^* \geq y ] = 1 - \int_0^y f_{R^*}(x) \;dx$ and $2^y - 1 > 0$.

Then, we multiply both sides by $\lambda$, resulting in \eqref{eq_capacity_lower_IAN}.
\end{IEEEproof}

\begin{property}[upper bound]
\label{prop_upper_IAN}
An upper bound of the cognitive spatial throughput given by \eqref{eq_spatial_capacity_IAN} is computed as
\begin{equation}
\label{eq_capacity_upper_IAN}
    \mathcal{C}_\textup{IAN} \leq \lambda \log_2\left( 1 + \left(\dfrac{1}{\lambda \pi d^2}\right)^\frac{\alpha}{2} \Gamma\left(1 + \dfrac{\alpha}{2}\right)\right).
\end{equation}
where  $\Gamma(\cdot)$ is the Euler gamma function  defined as $\Gamma(z) = \int_0^\infty t^{z-1} e^{-t} \; dt$.
\end{property}

\begin{IEEEproof}
Let us apply Jensen's inequality based on the concavity of \eqref{eq_spatial_capacity_IAN} (refer to Property \ref{prop_concavity_IAN}), yielding
\begin{align}
\label{eq_proof_upper}
    \mathcal{C}_\textup{IAN}
        & =  \lambda \; \mathbb{E}[  R^* ] & \\
        & \stackrel{(a)}{=} \lambda  \; \mathbb{E}[  \log_2(1+\beta^*) ] & \\
        \label{eq_capacity_upper_IAN2}
        & \stackrel{(b)}{\leq} \lambda \; \log_2(1+\mathbb{E}[\beta^*]),  &
\end{align}
where  equality $(a)$ comes from the fact that $R^*=\log_2(1+\beta^*)$ and  inequality $(b)$ is the Jensen inequality for quasi-concave functions.
Then, we compute the expectation of the random variable $\beta^*$ using \eqref{eq_pdf_beta_IAN}, which proves \eqref{eq_capacity_upper_IAN}.
\end{IEEEproof}

\begin{property}[asymptotic equivalence]
\label{prop_asymp_IAN}
Let $\sim$ denote the asymptotic equivalence of two functions, then
\begin{equation}
\label{eq_asymp}
    \mathcal{C}_\textup{IAN} \sim c \; \lambda^{1 - \frac{\alpha}{2}},
\end{equation}
when $\lambda \rightarrow \infty$ and $c = \left(\dfrac{1}{\pi d^2}\right)^\frac{\alpha}{2} \Gamma\left(1 + \dfrac{\alpha}{2}\right)$.
\end{property}

\begin{IEEEproof}
To prove that two functions $f(x)$ and $g(x)$ are asymptotically equivalent, i.e.  $f(x) \sim g(x)$, we should show that $\displaystyle \lim_{ x\rightarrow \infty} f(x)/g(x) = 1$.
Let us first consider the behavior of the random variable $\beta^*$, characterized by \eqref{eq_pdf_beta_IAN} when $\lambda \rightarrow \infty$, yielding
\begin{equation}
\label{eq_limit1}
    \lim_{\lambda \rightarrow \infty} f_{\beta^*}(x) = \delta(x),
\end{equation}
where $\delta(x)$ is the Dirac impulse function. 

This indicates that the random variable $\beta^*$ tends to have the value $0$ with high probability when the network density increases. Now, let us consider that $\beta^* \rightarrow 0$, then we have the following limit
\begin{equation}
\label{eq_limit2}
    \lim_{\beta^* \rightarrow 0} \dfrac{\log_2(1+\beta^*)}{\beta^*} = \dfrac{1}{\ln 2}.
\end{equation}

Using these limits and recalling \eqref{eq_pdf_beta_IAN}, we can manipulate the expression of the cognitive spatial throughput $\mathcal{C}_\textup{IAN}$ as follows.
\begin{equation}
\label{eq_limit3}
    \lim_{\lambda \rightarrow \infty} \mathcal{C}_\textup{IAN} = \lim_{\lambda\rightarrow \infty} \lambda \; \mathbb{E} [\log_2(1+\beta^*)] = \lim_{\lambda\rightarrow \infty} \lambda \; \dfrac{\mathbb{E}[\beta^*]}{\ln 2}.
\end{equation}

Proceeding similarly with the upper bound in \eqref{eq_capacity_upper_IAN2}, we have
\begin{equation}
\label{eq_limit4}
    \lim_{\lambda\rightarrow \infty} \lambda \; \log_2(1+\mathbb{E} [\beta^*]) = \lim_{\lambda\rightarrow \infty} \lambda  \; \dfrac{\mathbb{E}[\beta^*]}{\ln 2}.
\end{equation}

Now, we recall that the division of limits is the limit of the division, resulting in
\begin{equation}
\label{eq_limit5}
    \lim_{\lambda\rightarrow \infty} \dfrac{\lambda \; \mathbb{E}[\log_2(1+\beta^*)]}{\lambda \; \log_2(1+\mathbb{E} [\beta^*])} = 1.
\end{equation}

From this fact, we can state from \eqref{eq_capacity_upper_IAN} that
\begin{equation}
\label{eq_asymp_new}
    \mathcal{C}_\textup{IAN} \sim \lambda  \log_2\left( 1 + \left(\dfrac{1}{\lambda \pi d^2}\right)^\frac{\alpha}{2} \Gamma\left(1 + \dfrac{\alpha}{2}\right)\right),
\end{equation}
when $\lambda \rightarrow \infty$.

To conclude this proof, we verify that $\left(\dfrac{1}{\lambda \pi d^2}\right)^\frac{\alpha}{2} \Gamma\left(1 + \dfrac{\alpha}{2}\right) \rightarrow 0$ when $\lambda \rightarrow \infty$ and then we apply the approximation $\log(1+x) \approx x$ valid when $x\rightarrow 0$ into \eqref{eq_asymp_new} resulting \eqref{eq_asymp}.
\end{IEEEproof}

\begin{figure}[t]
    \centering
    \includegraphics[width=0.7\linewidth]{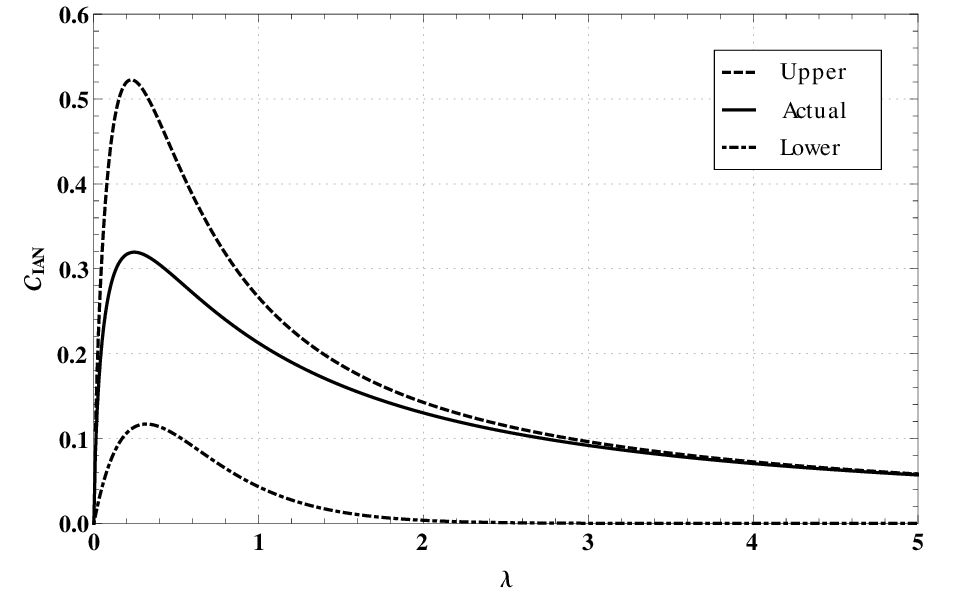}
    \caption{Actual values, lower and upper bounds of the cognitive spatial throughput, $\mathcal{C}_\textup{IAN}$, versus the network density $\lambda$ for $\alpha=4$ and $d=1$. The lower bound is obtained using  $y=1$ in equation \eqref{eq_capacity_lower_IAN}. The actual values and upper bound are computed using equations \eqref{eq_spatial_capacity_IAN} and \eqref{eq_capacity_upper_IAN}, respectively.}
\label{FigureAsymptotic_IAN}
\end{figure}

Fig. \ref{FigureAsymptotic_IAN} illustrates the behavior of the cognitive spatial throughput $\mathcal{C}_\textup{IAN}$ and its proposed bounds as a function of the network density $\lambda$.
Firstly, one can notice that the cognitive spatial throughput has a maximum point which is expected from its concavity stated in Property \ref{prop_concavity_IAN} and the density $\lambda^*$ that achieves the optimal is given by equation \eqref{eq_lambda_opt_IAN}\footnote{A closed-form solution is unknown for this equation  but standard numerical methods solve it. In our case, we use {\sf FindRoot} from Wolfram Mathematica 9.}.
When densities lower than this maximum are considered, the network is spatially not saturated in terms of interference and the cognitive spatial throughput of the network is still not in its highest value.
In this situation, any increase of $\lambda$ leads to an increase of $\mathcal{C}_\textup{IAN}$ until the  inflexion point is achieved.
After that point,  the network spatial throughput degrades due to the proximity of the interferers, strongly reducing the average of the link rates $R^*$.
Consequently, $\mathcal{C}_\textup{IAN}$  becomes a decreasing function of $\lambda$.

From Fig. \ref{FigureAsymptotic_IAN}, we can also evaluate the proposed upper and lower bounds of the cognitive spatial throughput.
As one can notice the lower bound proposed in Property \ref{prop_lower_IAN} is loose, regardless of $\lambda$.
In fact, the main use for this bound is to prove the relation between the cognitive spatial throughput and the maximum spatial throughput achieved with fixed rates, as it will be discussed later on.
Regarding Property \ref{prop_upper_IAN}, when $\lambda$ increases, the upper bound become tighter, as predicted by Property \ref{prop_asymp_IAN}.
In other words, the upper bound has the same value as the cognitive spatial throughput $\mathcal{C}_\textup{IAN}$ when $\lambda \rightarrow \infty$ as shown in Fig. \ref{FigureAsymptotic_IAN}.
In the next section, we apply the same approach used here to derive the cognitive spatial throughput and its properties when OPT decoders are considered.

\section{Optimal Decoding Rule}
\label{sec_OPT}

As previously discussed, the optimal decoding strategy when Gaussian point-to-point  codes are used in wireless networks with multiple transmitter-receiver pairs consists in jointly decoding some messages from the strongest interferers, while the rest is treated as noise.
Based on this observation, we obtain the achievable rates for links whose receivers use OPT decoders as follows.
\begin{corollary}[achievable rates for OPT decoding rule]
Assuming that Gaussian noise and considering that TXs use the Gaussian point-to-point codes as described in Section \ref{sec_capacity_region}, then the rate $R_k$ associated with a given link  TX$_k$-RX$_k$ is said to be achievable when the OPT decoder is employed if, and only if, the following inequality holds:
\begin{equation}
\label{eq_OPT_inequality}
      R_k \leq \log_2\left( 1 + \dfrac{\underset{i \in \mathcal{A}^*_k}{\sum} P_{ki}}{1 + \underset{j \in \bar{\mathcal{A}^*_k}}{\sum} P_{kj}} \right) - \underset{i \in \mathcal{A}^*_k \backslash \{ k \}}{\sum} R_i,
\end{equation}
where $\mathcal{A}^*_k$ represents the subset of transmitters whose messages are decoded by receiver $k$ and $\mathcal{A}^*_k \cup \bar{\mathcal{A}^*_k} = \mathcal{A}$ is the set of all active transmitters in the network.
\end{corollary}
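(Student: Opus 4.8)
The plan is to obtain \eqref{eq_OPT_inequality} as a direct specialization of Theorem \ref{the_cap_reg}, exactly as the IAN corollary is obtained by taking the singleton decoding set $\{k\}$. Under the OPT rule the decoder at RX$_k$ commits, by construction, to jointly decoding the messages of the transmitters in $\mathcal{A}^*_k$ (a set that contains TX$_k$) and to treating the remaining transmitters, those in $\bar{\mathcal{A}^*_k}$, as noise. So the natural move is to instantiate the MAC region $\mathcal{R}_k$ of \eqref{eq_MAC_capacity_region} at the particular subset $\mathcal{A}_k = \mathcal{A}^*_k$, whose complement in $\mathcal{A}$ is precisely $\bar{\mathcal{A}^*_k}$.

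First I would invoke \eqref{eq_capacity_region}: any achievable rate tuple $\mathbf{R}$ belongs to $\mathcal{R}$, hence in particular to $\mathcal{R}_k$. Next I would write out the defining inequality of $\mathcal{R}_k$ for the choice $\mathcal{A}_k = \mathcal{A}^*_k$, which reads $\sum_{i \in \mathcal{A}^*_k} R_i \le \log_2\!\left(1 + \frac{\sum_{i \in \mathcal{A}^*_k} P_{ki}}{1 + \sum_{j \in \bar{\mathcal{A}^*_k}} P_{kj}}\right)$. Then I would isolate $R_k$ on the left-hand side, moving the rates $\sum_{i \in \mathcal{A}^*_k \backslash \{k\}} R_i$ of the jointly-decoded interferers to the right, which yields \eqref{eq_OPT_inequality} verbatim. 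For the converse direction of the ``if and only if'', I would argue that, since under the OPT rule the decoding set $\mathcal{A}^*_k$ is fixed by design and the coding rates of its members are regarded as given side information, \eqref{eq_OPT_inequality} is exactly the binding constraint that $\mathcal{R}_k$ imposes on $R_k$; it is therefore both necessary and -- provided the other links' rates are chosen consistently with $\mathcal{R}$ -- sufficient for $R_k$ to be achievable. This parallels the IAN corollary, recovered from the same argument with $\mathcal{A}^*_k = \{k\}$.

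The only delicate point -- and the one I would phrase carefully -- is the ``iff'': among the many subset constraints defining $\mathcal{R}_k$, one must be explicit that the constraint associated with the full decoded set $\mathcal{A}^*_k$ is the operative one for the OPT scheme, because that is the set the receiver actually attempts to resolve, while the other subset inequalities are automatically accommodated by the successive-decoding structure of the Gaussian MAC. Beyond that observation, the derivation is a one-line algebraic rearrangement of \eqref{eq_MAC_capacity_region}, so no real computational obstacle is expected.
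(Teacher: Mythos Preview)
Your approach is essentially identical to the paper's: both obtain \eqref{eq_OPT_inequality} by instantiating the MAC constraint \eqref{eq_MAC_capacity_region} at the subset $\mathcal{A}_k=\mathcal{A}^*_k$ and then isolating $R_k$ by moving $\sum_{i\in\mathcal{A}^*_k\backslash\{k\}}R_i$ to the right-hand side. Your discussion of the ``iff'' is actually more careful than the paper's one-line proof, which simply calls this a ``simple manipulation'' of \eqref{eq_MAC_capacity_region} without elaborating.
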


\begin{IEEEproof}
To obtain \eqref{eq_OPT_inequality} we proceed with a simple manipulation of equation \eqref{eq_MAC_capacity_region}, isolating the rate $R_k$ related to  TX$_k$-RX$_k$ link by considering the subsets $\mathcal{A}^*_k$ that lead to achievable rates.
\end{IEEEproof}

Next we will apply the theorem stated above to statistically characterize the achievable rates over different spatial realizations using the OPT decoding rule and then approximate the expected maximum spatial throughput of the network described in Section \ref{sec_spatial_Poisson}, which is given by equation \eqref{eq_spatial_capacity_new}.
Under the assumption of OPT decoding rule, however, the analysis is more complicated since the receiver node should choose  the subset of messages that will be jointly decoded and then verify whether the coding rate of its own transmitter is achievable, given all other coding rates.
By construction, all receivers proceed in the same way and hence the analysis becomes a very intricate combinatorial problem.
For this reason, we need to approximate the pdf of the highest achievable rates for the OPT decoders, we resort to some assumptions that will be justified afterwards.

As before, we only consider the deterministic path-loss (refer to Section \ref{sec_IAN}) and that the sum of the interfering signals observed by RX$_k$ that are treated as noise can be approximated by the signal from the closest interferer amongst them, whose power is denoted $P_{k,\textup{clo}}$.
If the noise power is negligible compared to $P_{k,\textup{clo}}$, then $1 + \underset{j \in \bar{\mathcal{A}^*}_k }{\sum} P_{kj} \approx P_{k,\textup{clo}}$.
Based on these assumptions, we can state the following proposition.

\begin{proposition}[approximate pdf of the highest achievable rates for OPT]
\label{prop_pdf_OPT}
Let us denote the rate tuple that achieves the maximum expected spatial throughput for the network described in Section \ref{sec_spatial_Poisson} as $\mathbf{R}^* = (R_0^*,R_1^*,...) \in  \mathcal{R}$.
If the pdf of $R^*_k, \forall \; k \in \mathcal{A}$ follows the pdf of a typical rate $R^*$ and denoted by $f_{R^*}(x)$, then
\begin{equation}
\label{eq_pdf_R_OPT_general}
    f_{R^*}(x) \approx \sum_{i=0}^\infty \dfrac{(\lambda \pi d^2 )^i}{\Gamma(i)} \; e^{-\lambda \pi d^2} f_{R^*|n}(x|n=i)
	\vspace*{-1ex}
\end{equation}
where $f_{R^*|n}(x|n)$ is the pdf of $R^*$ given that $1+n$ messages are jointly decoded and is approximated by
\begin{equation}
\label{eq_pdf_R_OPT}
    f_{R^*|n}(x|n) \approx \ln 4  \; \dfrac{2^{(1+n)x} \lambda \pi d^2 }{\alpha} \; \left(\frac{2^{(1+n)x} -1}{1+n}\right)^{\frac{2}{\alpha}-1} \; e^{ - \lambda \pi d^2 \left(\left(\frac{2^{(1+n)x} -1}{1+n}\right)^\frac{2}{\alpha} -1\right)},
    \vspace{-1ex}
\end{equation}
such that $x>\frac{\log(2+n)}{1+n}$. 
\end{proposition}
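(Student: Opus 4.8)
The plan is to run the argument of Proposition~\ref{prop_pdf_IAN} again, but now carrying two sources of randomness: the number $n$ of interferers that RX$_0$ jointly decodes, and the distance from RX$_0$ to the closest interferer it treats as noise. As before I would place the typical receiver RX$_0$ at the origin with its transmitter TX$_0$ at distance $d$, invoke Slivnyak's theorem so that $\Phi$ is unaffected, and label the interferers $1,2,\dots$ by increasing distance $r_1<r_2<\cdots$ to RX$_0$.

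First I would fix the decoding set in the spirit of the bipolar model: model RX$_0$ as jointly decoding exactly those interferers that are \emph{closer to it than its own transmitter}, i.e.\ the ones lying in the disk of radius $d$ centred at RX$_0$. These are the $n$ nearest interferers, and by the counting/void properties of the PPP the count $n$ is Poisson with mean $\lambda\pi d^2$, which produces the mixing weights appearing in \eqref{eq_pdf_R_OPT_general}. Conditioned on there being exactly $n$ points inside the disk of radius $d$, the restriction of $\Phi$ to the complement $\{|x|>d\}$ is still a PPP of intensity $\lambda$, so the distance $r_{n+1}$ to the closest interferer treated as noise satisfies $\textup{Pr}[r_{n+1}>\rho\mid n]=e^{-\lambda\pi(\rho^2-d^2)}$ for $\rho>d$, i.e.\ $f_{r_{n+1}\mid n}(\rho)=2\lambda\pi\rho\,e^{-\lambda\pi(\rho^2-d^2)}$. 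This conditional contact-distance law is the analogue of \eqref{eq_pdf_r1_IAN} and is the origin of the $-1$ inside the exponential of \eqref{eq_pdf_R_OPT}.

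Next I would specialize the OPT achievable-rate corollary \eqref{eq_OPT_inequality} to this configuration. Using the closest-interferer approximation $1+\sum_{j\in\bar{\mathcal{A}}^*_0}P_{0j}\approx r_{n+1}^{-\alpha}$, the symmetric-rate choice $R_i^*=R_0^*$ for every jointly decoded link (so that $\sum_{i\in\mathcal{A}^*_0\setminus\{0\}}R_i$ becomes $n R_0^*$ and moves to the left-hand side), and approximating the total power of the $1+n$ jointly decoded signals by $(1+n)d^{-\alpha}$ (consistent with treating decoded links as effectively at the reference distance $d$), the rate constraint at the boundary reads $(1+n)R_0^*=\log_2\!\big(1+(1+n)(r_{n+1}/d)^\alpha\big)$. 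Solving, $r_{n+1}=d\big(\tfrac{2^{(1+n)R_0^*}-1}{1+n}\big)^{1/\alpha}$, and the physical requirement $r_{n+1}>d$ is exactly the stated support condition $x>\tfrac{\log_2(2+n)}{1+n}$. A change of variables from $r_{n+1}$ to $R_0^*=x$ — substituting this expression into $f_{r_{n+1}\mid n}$ and multiplying by the Jacobian $|dr_{n+1}/dx|=\tfrac{d\,2^{(1+n)x}\ln 2}{\alpha}\big(\tfrac{2^{(1+n)x}-1}{1+n}\big)^{1/\alpha-1}$, then using $\ln 4=2\ln 2$ — yields \eqref{eq_pdf_R_OPT}. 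Dropping the index $0$ by stationarity and deconditioning on $n$ with the Poisson-type weights gives \eqref{eq_pdf_R_OPT_general}.

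The hard part will not be the calculus but justifying the combinatorial modelling in the first two steps: Theorem~\ref{the_cap_reg} lets each receiver pick \emph{any} decoding subset and the choices couple across links (every RX decodes simultaneously, and a rate tuple is achievable only if it lies in the intersection \eqref{eq_capacity_region}), so identifying ``the'' decoding set with the points inside radius $d$, imposing a common symmetric rate, and replacing the decoded powers by $(1+n)d^{-\alpha}$ are heuristics that sidestep this intricate coupling — precisely the approximations the paper defers to Section~\ref{sec_Discussions}. A secondary point requiring care is that the weights $(\lambda\pi d^2)^i/\Gamma(i)\,e^{-\lambda\pi d^2}$ sum to $\lambda\pi d^2$ rather than $1$, so \eqref{eq_pdf_R_OPT_general} should be read as an (un-normalized) approximate density rather than an exact one.
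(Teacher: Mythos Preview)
Your proposal is correct and follows essentially the same approach as the paper: take the decoding set to be the interferers inside the disk of radius $d$ around RX$_0$, approximate the decoded power by $(1+n)P_{00}$ and the sum rate by $(1+n)R_0$, derive the conditional contact-distance law for the closest interferer treated as noise subject to $r_1>d$, change variables to $R_0^*$, and then uncondition via the Poisson weights. Your closing remark on the normalization of the $\Gamma(i)$ weights is a valid observation that the paper's own proof does not address.
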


\begin{IEEEproof}
Let us first deal with the typical link TX$_0$--RX$_0$.
Without loss of generality, we place the origin of the Cartesian plane at RX$_0$ and assume that all nodes that are closer to RX$_0$ than TX$_0$ have their messages jointly decoded with TX$_0$ message  (see Fig. \ref{FigureIllustration2}). 
From the distance-dependent path-loss model, the closer the TX, the higher the power, and then this choice of the subset $\mathcal{A}^*_0$ is justified.

\begin{figure}
\centering
\includegraphics[width=0.35\columnwidth]{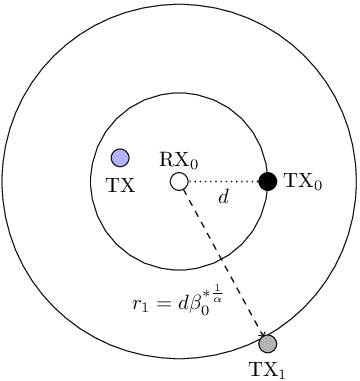}
\caption{Illustrative example of the typical link TX$_0$--RX$_0$ employing the OPT decoding rule. The blue TX has its message jointly decoded with TX$_0$ message and TX$_{1}$ is the closest interferer to RX$_0$ whose signal is treated as noise. The random variable $r_1$ denotes the distance between RX$_0$ and TX$_1$ such that $r_1>d$.}
\label{FigureIllustration2}
\end{figure}

For each network spatial realization, we consider that a number $n$ associated with the transmitters whose messages are decoded by RX$_0$ is known, which yields the following inequality
%
\begin{equation}
\label{eq_OPT_inequality2}
    \log\left( 1 + \dfrac{(1+n) P_{00}}{P_{0,\textup{clo}}} \right) < \log\left( 1 + \dfrac{P_{00} + \sum_{i=1}^n P_{0i}}{P_{0,\textup{clo}}} \right).
\end{equation}
One can observe from \eqref{eq_OPT_inequality} and \eqref{eq_OPT_inequality2} that rate tuples that satisfy $R_{0} + \sum_{i=1}^n R_{i} < \log\left( 1 + \frac{(1+n) P_{00}}{P_{0,\textup{clo}}} \right)$ are always achievable.
Defining $\beta^*_0 = P_{00}/P_{0,\textup{clo}}$, we use similar steps to the ones used in the proof of Proposition \ref{prop_opt_spatial_capacity_IAN}, but considering now that $r_1>d$ to compute the pdf $ f_{\beta^*_0}(x)$ as 
\begin{equation}
\label{eq_pdf_beta_OPT}
    f_{\beta^*_0}(x) = \dfrac{2 \lambda \pi d^2 x^{\frac{2}{\alpha}}}{\alpha x}  e^{- \lambda \pi d^2 \left(x^{\frac{2}{\alpha}}-1\right)},
    \vspace{-2ex}
\end{equation}
where $x > 1$ and $ f_{\beta^*_0}(x) = 0$ when $x \leq 1$.

Then, we assume that $R_0 + \sum_{i=1}^n R_{i} \approx (1+n) R_0$ to obtain $(1+n) R_0^* = \log\left( 1+ (1+n) \beta_0^* \right)$.
By applying such a transformation, we can find the pdf of $R^*_0$ given $n$.
From the assumption that the all links perform similar to the typical one, we can drop the index $0$, resulting in equation \eqref{eq_pdf_R_OPT}.
To unconditioned the pdf $f_{R^*|n}(x|n)$, we compute the probability that there exist$n=i$ points of the PPP in the area $\pi d^2$, concluding this proof.
\end{IEEEproof}

\textbf{Remark:} In addition to the closest interferer treated as noise approximation, this proposition is based on other two strong assumptions: (i) the detected power at RX$_0$ related to the $1+n$ jointly decoded messages is equal to $(1+n) P_{00}$  and (ii)  the sum rate associated with those messages is given by $(1+n) R_0$.
Assumption (i) uses the lower bound given by \eqref{eq_OPT_inequality2}, which indicates that we  underestimate the aggregate power and (ii)  approximates the sum of $1+n$ random variables that follows the same distribution by one random variable multiplied by $1+n$.
We argue that the underestimation by-product of (i) leaves us some room for variations in the sum rate approximation used in (ii).
In addition, due to the homogeneity of the spatial process, $R_0 + \sum_{i=1}^n R_{i} \approx (1+n) R_0$  leads to a reasonable approximation.
Simulations results are presented in Section \ref{sec_Discussions} where we discuss such approximations.

Here we  approximate the expected maximum spatial throughput\footnote{As in the previous section we use the term cognitive spatial throughput to refer to the approximate expected maximum spatial throughput.} $\mathcal{C}_\textup{OPT}$  when the OPT decoding rule is employed as
\begin{equation}
\label{eq_spatial_capacity_OPT}
    \mathcal{C}_\textup{OPT} \approx \lambda \; \sum_{i=0}^\infty \dfrac{(\lambda \pi d^2 )^i}{\Gamma(i)} \; e^{-\lambda \pi d^2} \int_{\frac{\log(2+i)}{1+i}}^\infty x \; f_{R^*|n}(x|n=i) \; dx,
\end{equation}
where $f_{R^*|n}(x|n=i)$ is given in Proposition \ref{prop_pdf_OPT}.

The integral in \eqref{eq_spatial_capacity_OPT} is analytically unsolvable (we can rely on numerical solutions, though).
To gain more insights on the system performance, we next derive some properties of the cognitive spatial throughput.

\begin{property}[concavity]
\label{prop_concavity_OPT}
Considering that the rate  $R^*$ is a function of the network density $\lambda$, then
$\mathcal{C}_\textup{OPT}$ given by \eqref{eq_spatial_capacity_OPT} is \emph{quasi-concave} in terms of $\lambda$, where $R^*$ is a random variable given by  \eqref{eq_pdf_R_OPT_general}.
\end{property}

\begin{property}[lower bound]
\label{prop_lower_OPT}
A lower bound of the cognitive spatial throughput given by \eqref{eq_spatial_capacity_OPT} is computed as
\begin{equation}
\label{eq_capacity_lower_OPT}
    \mathcal{C}_\textup{OPT} \geq \lambda \; \sum_{i=0}^\infty \dfrac{(\lambda \pi d^2 )^i}{\Gamma(i)}  y  e^{- \lambda \pi d^2\left(\frac{2^{(1+i)y}-1}{1+i} \right)^\frac{2}{\alpha}},    %
\end{equation}
where $y > \frac{\log_2(2 +i)}{1+i}$ for all $i \geq 0 $.
\end{property}

\begin{property}[upper bound]
\label{prop_upper_OPT}
A upper bound of the cognitive spatial throughput given by \eqref{eq_spatial_capacity_OPT} is computed as
\begin{equation}
\label{eq_capacity_upper_OPT}
    \mathcal{C}_\textup{OPT} \leq \lambda \; \sum_{i=0}^\infty \dfrac{(\lambda \pi d^2 )^i}{\Gamma(i)}  \; \dfrac{e^{-\lambda \pi d^2}}{1+i} \; \log_2\left( 1 + (1+i) \left(\dfrac{1}{\lambda \pi d^2} \right)^{\frac{2}{\alpha}} \Gamma\left( 1 + \dfrac{2}{\alpha}, \lambda \pi d^2 \right) e^{\lambda \pi d^2}    \right),
\end{equation}
where  $\Gamma(\cdot, \cdot)$ is the incomplete Gamma function, which is defined as $\Gamma(z,a) = \int_a^\infty t^{z-1} e^{-t} \; dt$.
\end{property}
\begin{property}[asymptotic equivalence]
\label{prop_asymp_OPT}
Let $\sim$ denote asymptotic equivalence of two functions, then
\begin{equation}
\label{eq_asymp_OPT}
    \mathcal{C}_\textup{OPT} \sim \lambda \; \sum_{i=0}^\infty \dfrac{(\lambda \pi d^2 )^i}{\Gamma(i)}  \; \dfrac{e^{-\lambda \pi d^2}}{1+i} \; \log_2\left( 1 + (1+i) \left(\dfrac{1}{\lambda \pi d^2} \right)^{\frac{2}{\alpha}} \Gamma\left( 1 + \dfrac{2}{\alpha}, \lambda \pi d^2 \right) e^{\lambda \pi d^2}    \right),
\end{equation}
when $\lambda \rightarrow \infty$.
\end{property}

The proofs of these properties follow the same principles used in the previous section so we do not present them here.
It is worth pointing though out that the proofs of \eqref{eq_capacity_lower_OPT}-\eqref{eq_asymp_OPT} begin by assuming that the number $1+n$ of jointly decoded messages is known.
Then, we use the fact that the unconditioned the cognitive spatial throughput is a linear combination of  the conditioned cognitive spatial throughputs with weights given by the Poisson probabilities that $n=i$ nodes lie in a area of $\pi d^2$, given by the probability $\frac{(\lambda \pi d^2 )^i}{\Gamma(i)}  \; e^{-\lambda \pi d^2}$.

Fig. \ref{FigureAsymptotic_OPT} presents the  cognitive spatial throughput $\mathcal{C}_\textup{OPT}$ given by \eqref{eq_spatial_capacity_OPT} as a function of $\lambda$ together with its proposed upper and lower bounds.
One can observe that the lower bound given by Property \ref{prop_lower_OPT} is very loose for the value of the constant $y$ that was arbitrarily chosen ($y=2$).
This bound, however, can be improved by tuning the constant $y$ in accordance to the number of jointly decoded messages.
Such an improvement in the proposed bound will be discussed in the next section when we apply it to analytically assess the performance of networks where predetermined fixed rates are imposed.

\begin{figure}[t]
\centering
\includegraphics[width=0.7\linewidth]{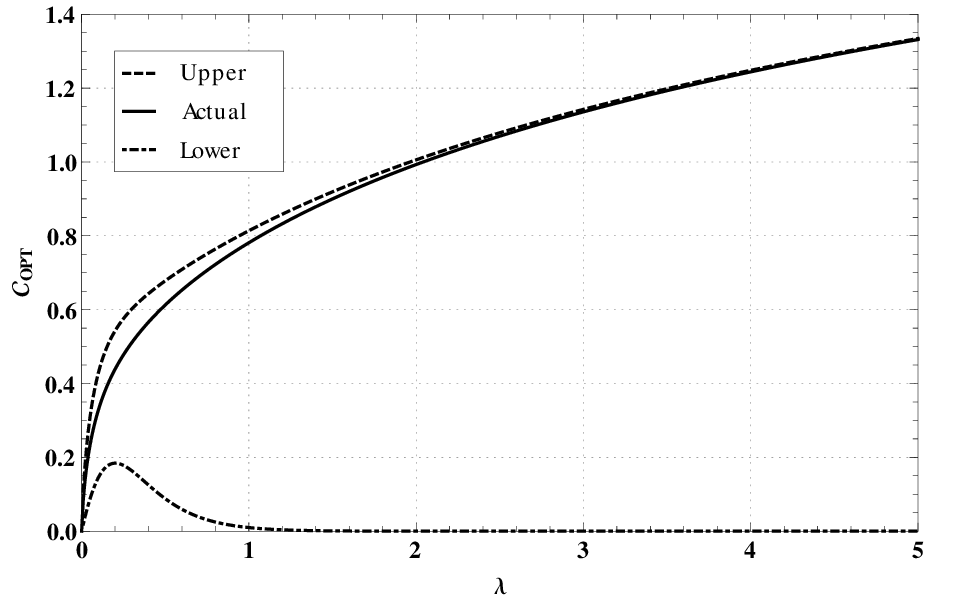}
\caption{Actual values, lower and upper bounds of the cognitive spatial throughput, $\mathcal{C}_\textup{OPT}$, versus the network density $\lambda$ for $\alpha=4$ and $d=1$. The lower bound is obtained using $y=2$ in equation \eqref{eq_capacity_lower_OPT}. The actual values and upper bound are computed using equations \eqref{eq_spatial_capacity_OPT} and \eqref{eq_capacity_upper_OPT}, respectively.}
\label{FigureAsymptotic_OPT}
\end{figure}
Turning your attention  to the values of $\mathcal{C}_\textup{OPT}$  given by \eqref{eq_spatial_capacity_OPT}, one can easily see that it is an increasing function of $\lambda$.
For lower densities, $\mathcal{C}_\textup{OPT}$ increases faster since the probability that an interfering TX has its message jointly decoded is also low and, consequently, the rate is constrained by the interferers that are treated as noise, indicating that $\mathcal{C}_\textup{OPT}$ is limited by the low spatial reuse.
When $\lambda$ increases, on the other hand, more messages from interfering TXs start being jointly decoded, which diminishes the $\mathcal{C}_\textup{OPT}$ rate of increase.
Furthermore, we can observe that the upper bound proposed in Property \ref{prop_upper_OPT} is a good approximation to $\mathcal{C}_\textup{OPT}$ for all densities $\lambda$ especially when $\lambda \rightarrow \infty$, corroborating Property \ref{prop_asymp_OPT}.

By comparing the results shown in Fig. \ref{FigureAsymptotic_IAN} (IAN) and Fig. \ref{FigureAsymptotic_OPT} (OPT), one can see that the OPT decoding rule provides higher cognitive spatial throughputs, regardless of the network density.
The performance gain obtained with the OPT decoder  indicates that the mechanism of joint detection used here is a good way to cope with the strongest interferers.
A more detailed comparative analysis between OPT and IAN decoding rules is presented later.

In the following section, we compare the results obtained so far with the non-cognitive approach:  coding rates are fixed for a given network density and set to optimize the average spatial throughput, regardless of a specific network topology.
In this way, the transmitters do not use the local knowledge of the network topology as side information, leading to outage events  (i.e. some pairs use coding rates above their channel capacity).

\section{Spatial Throughput Optimization Using Predetermined Fixed Rates}
\label{sec_Sym}
We now focus our attention on  scenarios where TXs, which do not have access to location information, set their coding rates to the fixed values that leads to the highest expected spatial throughput, given, however, that the TXs are aware of how many messages are jointly decoded by their RXs.
Using this scheme, groups of TXs use the same fixed coding rates and then an optimization problem is formulated to find these rates such that the expected spatial throughput is maximized.
As a result the optimal choice of coding rates, as discussed later on, is outside the network capacity region, stated in Theorem \ref{the_cap_reg}, leading to outage events for some links.
Next, we define of the aforementioned optimization problem.

\begin{definition}[highest expected spatial throughput]
\emph{The expected spatial throughput optimization problem for a network where TXs have fixed coding rates is defined as
\vspace{-2ex}
\begin{equation}
\label{eq_transm_capacity1}
    \mathcal{T} = \max_{\mathbf{R}} \; \mathbb{E}[\mathcal{S}],
    \vspace{-1ex}
\end{equation}
where $\mathcal{T}$ is the maximum expected spatial throughput, $\mathbf{R}=(R_0, R_1,...)$ represents the set of fixed coding rates $R_i$ used by the TXs such that $i$ is the number of jointly decoded messages in addition to the desired one, and $\mathcal{S}$ is the spatial throughput given by \eqref{eq_spatial_rate}, where only  the successful transmissions are taking into account.}
\end{definition}

When the IAN decoding rule is used, there is no jointly decoded message and then the optimization is only related to one fixed coding rate\footnote{This is the usual approach as in \cite{Weber2012},\cite{Blomer2009}.}.
We now present two propositions that state the highest expected spatial throughputs for IAN and OPT decoders applying the network model used before\footnote{Once again we use the closest intereferer treated as noise approximation.}.

\begin{proposition}[highest expected spatial throughput for IAN]
\label{pro_sym_IAN}
The highest expected spatial throughput $\mathcal{T}_\textup{IAN}$ achieved when IAN decoders are used is given by
\vspace{-2ex}
\begin{equation}
\label{eq_sym_IAN}
    \mathcal{T}_\textup{IAN} =  \lambda \log_2(1+\beta^*) e^{-\lambda \pi d^2 \beta^{*\frac{2}{\alpha}}},
    \vspace{-1ex}
\end{equation}
where $\beta^*$ is  the value of $\beta>0$, which is solution of
\vspace{-1ex}
\begin{equation}
    \label{eq_beta_sym_IAN}
    \beta= \left( \dfrac{2}{\alpha} \lambda \pi d^2 (1+\beta) \ln(1+\beta)  \right)^\frac{\alpha}{\alpha-2}.
    \vspace{-1ex}
\end{equation}
\end{proposition}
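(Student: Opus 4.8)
The plan is to set up the expected spatial throughput $\mathbb{E}[\mathcal{S}]$ as an explicit function of a single fixed coding rate $R$, then change variables to $\beta = 2^R - 1$ and optimize. First I would note that, under the IAN rule with the closest-interferer-as-noise approximation, a link with fixed rate $R$ succeeds exactly when $\beta_0^* = d^{-\alpha}/r_1^{-\alpha} \geq \beta$, i.e. when $2^R - 1 \leq \beta_0^*$, since the achievable-rate corollary gives $R \leq \log_2(1 + \beta_0^*)$ as the success condition. Using the pdf of $\beta_0^*$ from \eqref{eq_pdf_beta_IAN}, the success probability is $\textup{Pr}[\beta_0^* \geq \beta] = 1 - \int_0^{\beta} f_{\beta_0^*}(x)\,dx = e^{-\lambda \pi d^2 \beta^{2/\alpha}}$, which is immediate because $f_{\beta_0^*}$ is (after the change of variables from \eqref{eq_pdf_r1_IAN}) exactly the pushforward of the nearest-neighbor density. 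Then, appealing to Proposition \ref{the_spatial_capacity}, the expected spatial throughput with only successful transmissions counted is
\begin{equation}
\mathbb{E}[\mathcal{S}] = \lambda \, R \, \textup{Pr}[\text{success}] = \lambda \, \log_2(1+\beta)\, e^{-\lambda \pi d^2 \beta^{2/\alpha}},
\end{equation}
which is already the right-hand side of \eqref{eq_sym_IAN}; it remains only to identify the maximizing $\beta^*$.

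Next I would carry out the optimization over $\beta > 0$. Differentiating $\lambda \log_2(1+\beta) e^{-\lambda \pi d^2 \beta^{2/\alpha}}$ with respect to $\beta$ and setting the derivative to zero gives
\begin{equation}
\frac{1}{(1+\beta)\ln 2}\, e^{-\lambda \pi d^2 \beta^{2/\alpha}} - \log_2(1+\beta)\, \lambda \pi d^2 \frac{2}{\alpha}\beta^{2/\alpha - 1}\, e^{-\lambda \pi d^2 \beta^{2/\alpha}} = 0.
\end{equation}
Cancelling the common exponential factor and the $\ln 2$ (noting $\log_2(1+\beta) = \ln(1+\beta)/\ln 2$), this reduces to
\begin{equation}
\frac{1}{1+\beta} = \frac{2}{\alpha}\, \lambda \pi d^2\, \beta^{2/\alpha - 1}\, \ln(1+\beta),
\end{equation}
and multiplying through by $\beta^{1 - 2/\alpha}(1+\beta)$ and then raising both sides to the power $\alpha/(\alpha - 2)$ yields exactly \eqref{eq_beta_sym_IAN}. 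To finish I would argue that this stationary point is the global maximum on $\beta > 0$: the objective vanishes at $\beta \to 0^+$ (since $\log_2(1+\beta)\to 0$) and at $\beta \to \infty$ (since the exponential decay dominates the logarithmic growth, using $\alpha > 2$), and it is positive in between and continuous, so a maximizer exists in the interior and satisfies the first-order condition; uniqueness of the relevant root follows from monotonicity of the two sides of the stationarity equation (left side decreasing, right side increasing in $\beta$).

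The main obstacle is not the calculus but making the success-probability step rigorous given the chain of approximations: the statement treats $\mathbb{E}[\mathcal{S}]$ under the closest-interferer-as-noise approximation and under the density-times-typical-link approximation of Proposition \ref{the_spatial_capacity}, so the cleanest exposition is to state explicitly that "expected spatial throughput" here means $\lambda \cdot \mathbb{E}[R \cdot \mathbf{1}\{\text{typical link succeeds}\}]$ with success governed by the closest interferer, and then everything downstream is exact. A secondary subtlety is verifying that the stationarity equation has a unique positive solution so that "$\beta^*$ is the value of $\beta>0$ which is solution of \eqref{eq_beta_sym_IAN}" is well posed; I expect this to follow from a short monotonicity argument as sketched above, but it is the one place where a little care is needed rather than pure symbol-pushing.
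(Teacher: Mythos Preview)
Your proposal is correct and follows essentially the same approach as the paper: both write $\mathbb{E}[\mathcal{S}] = \lambda \log_2(1+\beta)\,e^{-\lambda \pi d^2 \beta^{2/\alpha}}$ from the closest-interferer success probability, then solve $d\mathcal{S}/d\beta = 0$ to obtain \eqref{eq_beta_sym_IAN}. The only minor differences are cosmetic---the paper obtains $P_\textup{s}$ directly as the Poisson void probability for the disc of radius $d\beta^{1/\alpha}$ rather than via the CDF of $\beta_0^*$, and it simply asserts concavity of $\mathcal{S}$ in $\beta$ where you supply the boundary-behavior and monotonicity argument for uniqueness of the maximizer.
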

\begin{IEEEproof}
Let us first rewrite the expected  spatial throughput given by \eqref{eq_spatial_rate} for this scenario as
\vspace{-2ex}
\begin{equation}
 \label{eq_spacap_sym}
    \mathcal{S} = \lambda R P_\textup{s},
    \vspace{-1ex}
\end{equation}
where $R$ is the fixed coding rate used by all TXs and $P_\textup{s}$ is the corresponding success probability.

We proceed here similarly to the proof of Proposition \ref{prop_pdf_IAN} and then apply the relation  $R=\log_2(1+\beta)$, where $R, \beta>0$.
From the closest  interferer assumption, an outage event occurs whenever an interfering TX node lies inside the area defined by the circumference centered at the RX node and with radius $d\beta^\frac{1}{\alpha}$ (see Fig. \ref{FigureIllustration}).
Using the Poisson distribution, we have that $P_\textup{s} = e^{-\lambda \pi d^2 \beta^{\frac{2}{\alpha}}}$.
Hence, we can rewrite equation \eqref{eq_spacap_sym}  as
\vspace{-2ex}
\begin{equation}
 \label{eq_spacap_sym_IAN}
    \mathcal{S} = \lambda \log_2(1+\beta) e^{-\lambda \pi d^2 \beta^{\frac{2}{\alpha}}},
    \vspace{-2ex}
\end{equation}
which is a concave function of $\beta$.

Hence, we compute $\beta^*$ which is the solution of the derivative equation $d\mathcal{S}/d\beta = 0$, resulting after some manipulation in \eqref{eq_beta_sym_IAN}.
To conclude this proof, we use $\beta^*$ into equation \eqref{eq_spacap_sym_IAN}, obtaining \eqref{eq_sym_IAN}.
\end{IEEEproof}

\begin{proposition}[highest expected spatial throughput for OPT]
\label{pro_sym_OPT}
The highest expected spatial throughput $\mathcal{T}_\textup{OPT}$ achieved when OPT decoders are used is given by
\vspace{-1ex}
\begin{equation}
\label{eq_sym_OPT}
    \mathcal{T}_\textup{OPT} =  \lambda \; \sum_{i=0}^\infty \dfrac{(\lambda \pi d^2 )^i}{\Gamma(i)}  \; \dfrac{e^{-\lambda \pi d^2}}{1+i} \; \log_2(1 + (1+i)\beta_i^*) \;  e^{- \lambda \pi d^2 \left(\beta_i^{*\frac{2}{\alpha}}-1\right)}
	\vspace{-1ex}
\end{equation}
where, $\beta^*_i$ is found as the value of $\beta_i>1$ for $i \in \mathbb{N}$, which is solution of
\vspace{-1ex}
\begin{equation}
\label{eq_beta_sym_OPT}
    \beta_i = \left(\dfrac{2}{(1+i) \alpha} \; \lambda \pi d^2 (1 + (1+i) \beta_i) \; \ln(1+(1+i)\beta_i) \right)^{\frac{\alpha}{\alpha-2}}.
    \vspace{-1ex}
\end{equation}
\end{proposition}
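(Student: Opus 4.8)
The plan is to mirror the proof of Proposition \ref{pro_sym_IAN}, but to first condition on the number $n$ of messages jointly decoded by the typical receiver. Recall that, under the assumptions of Proposition \ref{prop_pdf_OPT}, a receiver jointly decodes exactly the interferers closer to it than its own transmitter, so $n$ equals the number of points of $\Phi$ falling in the disk of radius $d$ centered at the typical receiver; by the Poisson law $\textup{Pr}[n=i]=\frac{(\lambda\pi d^2)^i}{\Gamma(i)}e^{-\lambda\pi d^2}$. Since only successful transmissions are counted and, by Slivnyak's theorem, the typical link is representative, I would write
\begin{equation}
\mathcal{T}_\textup{OPT} = \lambda \sum_{i=0}^\infty \textup{Pr}[n=i] \; R_i \; P_{\textup{s},i},
\end{equation}
where $R_i$ is the fixed coding rate used by the transmitters whose receiver decodes $i$ extra messages and $P_{\textup{s},i}$ the associated success probability. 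Because the $R_i$ are independent optimization variables and the $i$-th summand depends only on $R_i$, the maximization over $\mathbf{R}$ decouples into a family of scalar problems, one per $i$.

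Next I would characterize $P_{\textup{s},i}$. Combining the OPT achievability condition \eqref{eq_OPT_inequality} with the two approximations of Proposition \ref{prop_pdf_OPT} — replacing the $1+i$ jointly decoded powers by $(1+i)P_{00}$ and their aggregate rate by $(1+i)R_i$ — the typical link is decoded correctly if and only if $(1+i)R_i \le \log_2(1+(1+i)\beta)$ with $\beta=P_{00}/P_{0,\textup{clo}}$. Writing $R_i=\frac{1}{1+i}\log_2(1+(1+i)\beta_i)$, an outage occurs precisely when $\beta<\beta_i$, i.e. when the closest interferer treated as noise lies in the annulus of inner radius $d$ (all closer points are decoded) and outer radius $d\beta_i^{\frac{1}{\alpha}}$. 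Since this annulus is disjoint from the disk of radius $d$, the void probability of the PPP there is independent of $n$, so $P_{\textup{s},i}=e^{-\lambda\pi d^2(\beta_i^{\frac{2}{\alpha}}-1)}$, valid for $\beta_i>1$. Substituting, the $i$-th conditional spatial throughput is $\frac{1}{1+i}\log_2(1+(1+i)\beta_i)\,e^{-\lambda\pi d^2(\beta_i^{\frac{2}{\alpha}}-1)}$, which, exactly as in the IAN case, is unimodal in $\beta_i>1$ (the product of an increasing concave term and a decreasing exponential, vanishing as $\beta_i\to\infty$).

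Finally, for each $i$ I would impose the first-order condition: differentiate the product, cancel the common exponential factor, and isolate $\beta_i^{\frac{\alpha-2}{\alpha}}$; this yields the fixed-point equation \eqref{eq_beta_sym_OPT}. Plugging the maximizer $\beta_i^*$ of each scalar problem back into the decoupled sum gives \eqref{eq_sym_OPT}, completing the argument. I expect the main obstacle to be not the calculus but the justification that the per-$i$ optimization truly decouples: in the fixed-rate OPT scheme a transmitter whose message is decoded at a neighbour may itself use a different rate $R_{n'}$ with $n'\neq i$, so the aggregate rate at that receiver is not exactly $(1+i)R_i$. This is precisely what the homogeneity-based approximation (ii) of Proposition \ref{prop_pdf_OPT} absorbs, and the proof inherits whatever looseness that approximation carries — to be assessed numerically in Section \ref{sec_Discussions}.
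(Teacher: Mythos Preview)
Your proposal is correct and follows essentially the same approach as the paper's own (outline of) proof: condition on the number $i$ of interferers inside the disk of radius $d$, use the annulus void probability to obtain $P_{\textup{s},i}=e^{-\lambda\pi d^2(\beta_i^{2/\alpha}-1)}$ with $\beta_i>1$, optimize each summand separately via the first-order condition to get \eqref{eq_beta_sym_OPT}, and then average with the Poisson weights. Your write-up is in fact more detailed than the paper's outline, and your closing remark on the decoupling approximation correctly pinpoints where assumption (ii) of Proposition~\ref{prop_pdf_OPT} is being invoked.
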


\begin{IEEEproof}[Outline of proof]
To prove this proposition, we follow the same steps used in the proof of Proposition \ref{pro_sym_IAN}, considering these basic differences: $\beta_i = d^{-\alpha}/r_1^{-\alpha}>1$ (since messages from TXs closer to a given RX than its own TX are jointly decoded and then $r_1>d$) and the optimization is proceeded for each $i=0,1,2,...$ which yields \eqref{eq_beta_sym_OPT}.
To conclude this outline, we average the expected spatial throughputs by the Poisson probabilities that  $i$ nodes lie  in the area $\pi d^2$, resulting in \eqref{eq_sym_OPT}.
\end{IEEEproof}

Here we apply Properties \ref{prop_lower_IAN} and \ref{prop_lower_OPT} to obtain an analytical relation between the expected highest spatial throughput $\mathcal{C}$ (cognitive) and the highest expected spatial throughput $\mathcal{T}$ (non-cognitive) using fixed rates for either decoding rules.

\begin{proposition}[$\mathcal{C}$  vs. $\mathcal{T}$]
\label{prop_spacap_vs_sym}
For a given density $\lambda$ and assuming that all links use the same decoding rule (either IAN or OPT), then
\vspace{-1ex}
\begin{equation}
\label{eq_spacap_vs_sym}
	\mathcal{C} \geq \mathcal{T}.
    \vspace{-2ex}
\end{equation}
\end{proposition}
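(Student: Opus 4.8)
Here is a proof plan. The key is that Properties~\ref{prop_lower_IAN} and~\ref{prop_lower_OPT} are Markov-type bounds with a free threshold parameter, and that the fixed-rate throughput $\mathcal{T}$ is, under the closest-interferer approximation used throughout, nothing but that same bound evaluated at the \emph{best} threshold. So the plan is to instantiate the free constant in the lower bound on $\mathcal{C}$ with the optimizer that defines $\mathcal{T}$, and observe that the two expressions then coincide. The underlying elementary fact is that for any threshold $y>0$ one has the pointwise inequality $R^*\geq y\,\mathbf{1}\{R^*\geq y\}$, hence $\mathbb{E}[R^*]\geq y\,\textup{Pr}[R^*\geq y]$; and in the closest-interferer model a link coded at a fixed rate $y$ is successful exactly when its realization-dependent achievable rate $R^*$ exceeds $y$, so $y\,\textup{Pr}[R^*\geq y]$ is precisely the per-link contribution to the fixed-rate throughput. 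Multiplying by $\lambda$ and optimizing over $y$ yields $\mathcal{T}$.

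For the IAN rule I would argue as follows. From the proof of Proposition~\ref{prop_pdf_IAN}, $R^*=\log_2(1+\beta^*)$ and $\textup{Pr}[R^*\geq y]=e^{-\lambda\pi d^2(2^y-1)^{2/\alpha}}$, so Property~\ref{prop_lower_IAN} states $\mathcal{C}_\textup{IAN}\geq\lambda\,y\,e^{-\lambda\pi d^2(2^y-1)^{2/\alpha}}$ for all $y>0$. Setting $y=\log_2(1+\beta^*)$ with $\beta^*$ the solution of~\eqref{eq_beta_sym_IAN}, so that $2^y-1=\beta^*$, the right-hand side becomes exactly $\lambda\log_2(1+\beta^*)e^{-\lambda\pi d^2\beta^{*2/\alpha}}$, which is $\mathcal{T}_\textup{IAN}$ by~\eqref{eq_sym_IAN}. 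Hence $\mathcal{C}_\textup{IAN}\geq\mathcal{T}_\textup{IAN}$.

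For the OPT rule the same reasoning is applied after conditioning on the number $1+i$ of jointly decoded messages. Conditioned on $n=i$, the cognitive rate obeys $(1+i)R^*=\log_2(1+(1+i)\beta^*)$ and the conditional complementary cdf is $\textup{Pr}[R^*\geq y\mid n=i]=e^{-\lambda\pi d^2((\,(2^{(1+i)y}-1)/(1+i)\,)^{2/\alpha}-1)}$, giving the conditional Markov bound $\mathbb{E}[R^*\mid n=i]\geq y\,e^{-\lambda\pi d^2((\,(2^{(1+i)y}-1)/(1+i)\,)^{2/\alpha}-1)}$ for every admissible $y>\frac{\log_2(2+i)}{1+i}$; this is the refinement of Property~\ref{prop_lower_OPT} mentioned in the remark, where the constant is chosen per value of $i$. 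Taking $y=\frac{1}{1+i}\log_2(1+(1+i)\beta_i^*)$ with $\beta_i^*$ the solution of~\eqref{eq_beta_sym_OPT} (which indeed satisfies the admissibility constraint, since $\beta_i^*>1$ forces $1+(1+i)\beta_i^*>2+i$), so that $(2^{(1+i)y}-1)/(1+i)=\beta_i^*$, and then averaging over $i$ with the Poisson weights $\frac{(\lambda\pi d^2)^i}{\Gamma(i)}e^{-\lambda\pi d^2}$, reproduces term by term the expression~\eqref{eq_sym_OPT} for $\mathcal{T}_\textup{OPT}$. Thus $\mathcal{C}_\textup{OPT}\geq\mathcal{T}_\textup{OPT}$, and since both decoding rules are covered, \eqref{eq_spacap_vs_sym} follows.

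The step requiring the most care is not an estimate but an \emph{identification}: one must check that $\mathcal{C}$ and $\mathcal{T}$ are genuinely built on the same closest-interferer approximation and the same per-realization random variable $R^*$ (for IAN, and $R^*\mid n=i$ for OPT), so that the fixed-rate success probability really equals $\textup{Pr}[R^*\geq y]$ and the Markov bound becomes an inequality between the two quantities of interest rather than between unrelated ones. Once this is in place, the remainder is a direct substitution of the optimizers from Propositions~\ref{pro_sym_IAN} and~\ref{pro_sym_OPT} into the lower bounds of Properties~\ref{prop_lower_IAN} and~\ref{prop_lower_OPT}.
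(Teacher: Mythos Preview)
Your proposal is correct and follows essentially the same route as the paper's own proof: for IAN you specialize the free threshold $y$ in Property~\ref{prop_lower_IAN} to $y=\log_2(1+\beta^*)$, and for OPT you use the per-$i$ refinement of Property~\ref{prop_lower_OPT} with $y_i=\frac{1}{1+i}\log_2(1+(1+i)\beta_i^*)$, recovering \eqref{eq_sym_IAN} and \eqref{eq_sym_OPT} respectively. Your write-up is in fact more careful than the paper's, since you explicitly verify the admissibility constraint $y_i>\frac{\log_2(2+i)}{1+i}$ via $\beta_i^*>1$ and articulate the identification step (that the fixed-rate success probability equals $\textup{Pr}[R^*\geq y]$ under the common closest-interferer model), which the paper leaves implicit.
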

\begin{IEEEproof}
This statement is a consequence of Property \ref{prop_lower_IAN}, when we set the constant $y = \log(1+\beta^*)$ in \eqref{eq_capacity_lower_IAN}, yielding \eqref{eq_sym_IAN}.
Similarly, we use Property \ref{prop_lower_OPT}, applying for each different $i\in \mathbb{N}$ a different constant $y$ in \eqref{eq_capacity_lower_OPT} such that $y_i =  \frac{\log(1+(1+i)\beta^*_i)}{1+i}$, which yields \eqref{eq_sym_OPT},  concluding this proof.
\end{IEEEproof}

Fig. \ref{FigureSym_Rates} shows the maximum expected spatial throughput following the formulation derived in this section.
As proved in Proposition \ref{prop_spacap_vs_sym}, $\mathcal{T}$ is always lower than or equal to $\mathcal{C}$ for the same density and the same decoding rule.
This is justified by the methodology used to derive the cognitive spatial throughput, which allows for a choice of coding rate based on the location information for each different realization.
When fixed rates are used, the transmitters code their messages using a fixed rate that depends only on the number of other messages that are jointly decoded by their own receivers.
By optimizing based only on the average behavior, some RXs cannot successfully decode their messages for specific topologies, which decreases the expected spatial throughput.
Therefore, the cognitive strategy has always the best performance.
Besides given the decoding rule employed, the curves of $\mathcal{T}$ and $\mathcal{C}$ have a similar shape.

\begin{figure}[t]
\centering
\includegraphics[width=0.7\linewidth]{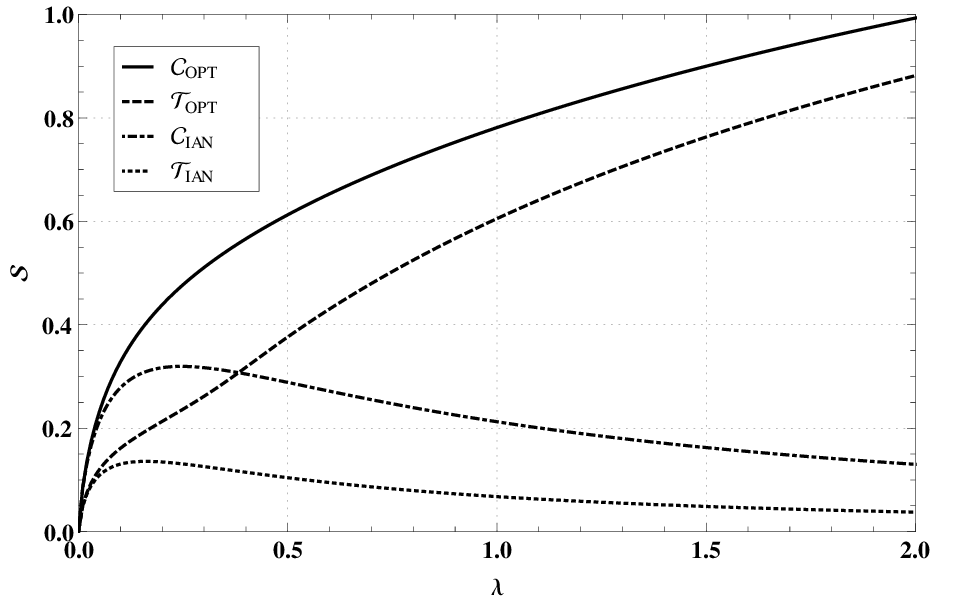}
\caption{The highest expected spatial throughputs $\mathcal{T}$ using fixed coding rates given by \eqref{eq_sym_IAN} and \eqref{eq_sym_OPT}, and the cognitive spatial throughputs $\mathcal{C}$ given by \eqref{eq_spatial_capacity_IAN} and \eqref{eq_spatial_capacity_OPT} as a function of the network density $\lambda$ for IAN and OPT decoding rules, $d=1$ and $\alpha=4$.}
\label{FigureSym_Rates}
\end{figure}
Fig. \ref{FigureSym_Rates} also shows that the cognitive spatial throughput obtained when OPT is used has  a huge gain if compared with the IAN option.
This result reflects that the OPT rule is able to avoid the strongest interferers by jointly decoding their messages.
When the density $\lambda$ is low, both OPT and IAN decoders have approximately the same performance since the probability that a interferer is closer to a given RX than its own TX is very low.
Increasing $\lambda$, such a probability also increases and then the differences between the strategies become apparent as the closest interferer is the limiting factor for IAN, while such node may have its message jointly decoded when OPT is used, what decreases the harmful effects of the nearby interferers.
%

\vspace{-2ex}
\section{Discussions}
\label{sec_Discussions}
So far we have showed that, for same network density, OPT decoders outperform IAN, and  the cognitive strategy outperforms the non-cognitive one when receivers employ the same decoding rule.
Nevertheless we still need to discuss some possible limitations of our finds, namely the tightness of our approximations and the feasibility of each decoding rule for practical implementations.
In the following subsections we deal with both aspects, identifying why our results are important even when our approximation is poor and for which circumstances the design setting that provides the worst performance is more suitable than the optimal.

\vspace{-2ex}
\subsection{Tightness of our approximation}
\label{subsec_tightness}
Here we discuss the validity of the ``closest interferer treated as noise approximation'' used to derive the approximate performance of both decoding rules.
Figure \ref{Figure_simulation} shows the cognitive spatial throughput  $\mathcal{C}$ computed using our analytical approximation and Monte Carlo simulation as a function of the network density $\lambda$ for both decoders\footnote{The results for the  highest expected spatial throughput presented in Section \ref{sec_Sym} follows the same trends and thus they are not presented here.}.
For both IAN and OPT, the lower the density is, the better our approximation works.
Conversely, increasing the density, our approximate spatial throughput gets looser and looser.
\begin{figure}[t]
\centering
\includegraphics[width=0.7\linewidth]{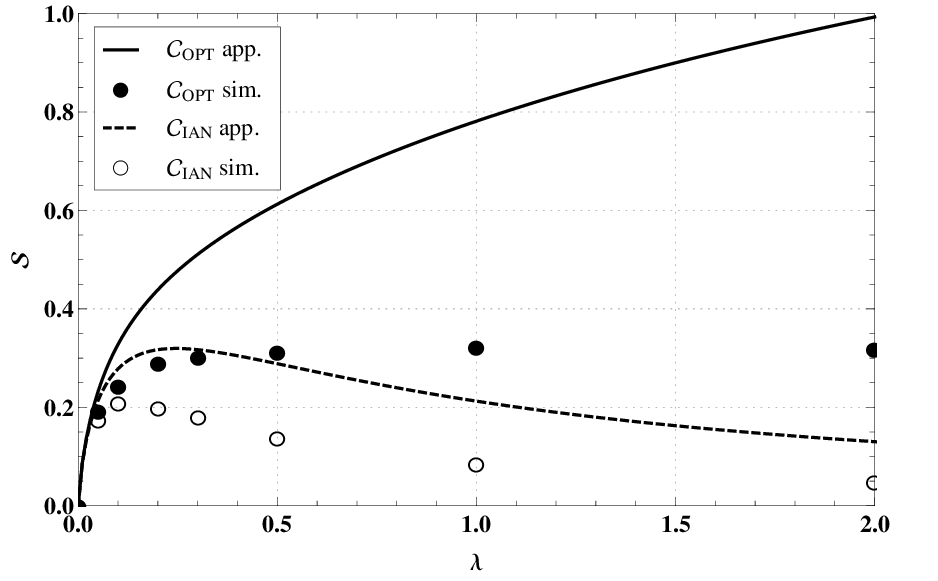}
\caption{Cognitive spatial throughputs  $\mathcal{C}$ for IAN and OPT as a function of the network density $\lambda$ , $d=1$ and $\alpha=4$. Approximate results given by equations \eqref{eq_spatial_capacity_IAN} and \eqref{eq_spatial_capacity_OPT}, and simulations.}
\label{Figure_simulation}
\end{figure}
The closest-interferer approximation is in fact a lower bound of the aggregate interference \cite{Weber2012}, leading then to an upper bound of the actual cognitive spatial throughput.
This bound have been proved to be asymptotically equivalent to the actual values when  $\lambda \rightarrow 0$\cite{Mordachev2009,Weber2012}\footnote{In our point of view this asymptotic analysis is unsuitable for the study carried out here; we assume an interference-limited network, which opposes the idea of very low density of interferers. When $\lambda \rightarrow 0$, we see the network in its noise-limited regime.}.
For higher densities, the closest interferer treated as noise tends to contribute less to the aggregate interference experienced by the receivers, worsening our approximation.
Besides, we obtained our numerical results using the path-loss exponent $\alpha=4$ and Weber et al. showed that lower exponents lead to looser bounds  \cite{WebYan2005}.

Yet, we believe that the comparison between the IAN and OPT decoders is fair since the results presented in Sections \ref{sec_IAN} and \ref{sec_OPT} rely on the same approximation\footnote{We can argue in the same way to say that the analysis presented in Section \ref{sec_Sym} is also fair.}.
We further argue that our approximation has no effect on the trade-off analysis done in this paper and Figure \ref{Figure_simulation} illustrates this fact by showing that the OPT always outperforms IAN in similar scales: the ratios  $\mathcal{C}_\textup{IAN}/\mathcal{C}_\textup{OPT}$ obtained via simulation or via our approximations have similar values when considering the same $\lambda$.
As the proposed formulation provides a computationally simpler way to assess the network performance than numerical simulations, we reinforce the contribution of this paper even when our approximations provide less accurate bounds. 

All in all, we believe that our main messages -- OPT is better than IAN, and cognitive strategy is better than the non-cognitive one -- are unaffected by our approximations, which are shown by both qualitative relations and quantitative ratios between our analytical and simulated results.
Despite of these facts, the optimal strategy is infeasible for practical implementation as discussed in the following.

\vspace{-2ex}
\subsection{Design setting and mobility pattern}
\label{subsec_mobility}
Throughout this paper we have shown that the best design option in terms of spatial throughput is to employ OPT decoders and apply the cognitive scheme.
This solution, however, has downsides: (i) RXs require the knowledge of the codebooks of the jointly decoded messages and (ii) OPT decoders are computationally more complex than IAN.

Knowing that, we argue that the use of either/both OPT and/or cognitive strategy is infeasible for (highly) mobile topologies.
Under this topology, the neighbours of any given receiver change very fast, rendering the joint decoding procedure impossible.
Shopping malls and streets where people  move frequently can exemplify this scenario.
If this is the case, even though the configuration employing IAN decoders with fixed rate optimization is far from the optimal performance, it is a more suitable choice.

Conversely, when (quasi-)static networks are considered, the optimal strategy becomes viable.
In this case, receiver nodes must known the codebooks of their strongest interfering nodes and jointly decode their messages. 
In addition, the links must coordinate their coding rates to be in the network capacity region. 
Smart homes, industry plants and other kind of machine-to-machine communications can exemplify this mobility pattern.

Besides, there are other aspects that may be prohibitive for OPT. 
For instance, many applications require secrecy and then the codebook knowledge makes OPT infeasible even for static topologies.
Other applications need fast processing time, which is also infeasible when many interfering messages are jointly decoded.
Anyway, this dependence of the topology must be taken into account when the network is designed.
Furthermore, the mobility pattern of the network can also change over time -- for example, offices during the night are quasi-static, while highly dynamic during parts of the working hours.

All these aspects indicates the needs for ad hoc adaptive algorithms that estimate the network state and proceed with their optimization according to their cognitive ability.
If the closest interferer treated as noise approximation gives a reliable indication, the results presented herein might even provide a practical way of implementing them.

\section{Final Remarks}
\label{sec_Final}
In this paper we studied the spatial throughput of cognitive networks using the Gaussian point-to-point codes, where transmitter nodes use the location information of their receiver's closest interferer to tune their coding rates.
Assuming that the network follows the bipolar Poisson model, we evaluated two different decoding rules: (i) treat all interfering messages as noise -- IAN, and (ii) jointly decode the messages whose detected power is higher than the desired message power while treat the remaining as noise -- OPT.

We proposed an approximation of the expected highest spatial throughput for Poisson distributed networks where transmitter nodes are able to cognitively tune their coding rates for each spatial realization based on the location information of the closest interferer of their respective receiver.
We then stated several properties of our approximation using either decoders and showed that, when  the same network density and decoding rules are assumed, the cognitive strategy always outperforms the non-cognitive one, where transmitters code their messages in order to optimize the expected spatial throughput using pre-determined fixed rates, regardless of a specific network realization.

These results can be actually used to implement an ad hoc algorithm capable to adapt the coding rates based on estimated information about distances, network density and mobility.
In fact, we have already identified the work done in \cite{PROC:LIMA-ITW09} as a potential starting point to further develop the theory presented here to more practical scenarios.

\bibliographystyle{IEEEtran}
\bibliography{ref_abbrev}

\end{document}